\tikzset{
GFleche/.style={
 draw=black,
 postaction={decorate},
 decoration={markings,mark=at position 0.5 with {\arrow{<}}}
 },
DFleche/.style={
 draw=black,
 postaction={decorate},
 decoration={markings,mark=at position 0.5 with {\arrow{>}}}
 }}
\definecolor{NiColor}{RGB}{77,77,255}
\def\WF{{\text{WF}}}
\def\beq{\begin{eqnarray}}
\def\eeq{\end{eqnarray}}
\def\supp{\textrm{supp}}
\def\supp{\textrm{supp}}
\newcommand{\dvol}[1]{\textrm{d}#1}
\newcommand{\vettore}[1]{{\bf #1}}
\def\remark {\vsp\ifhmode{\par}\fi\noindent\noindent {\bf Remark:} 
}
\DeclareMathOperator*{\vlim}{v-lim}
\newtheorem{theorem}{Theorem}[section]
\newtheorem{proposition}[theorem]{Proposition}
\newtheorem{definition}[theorem]{Definition}
\begin{document} 
%
%
 
\par 
\bigskip 
\LARGE 
\noindent 
{\bf Relative entropy and entropy production for equilibrium states in pAQFT} 
\bigskip 
\par 
\rm 
\normalsize 
 

\large
\noindent 
{\bf Nicol\`o Drago$^{1,2,a}$}, {\bf Federico Faldino$^{3,4,b}$}, {\bf Nicola Pinamonti$^{3,4,c}$} \\
\par
\small
\noindent$^1$ Dipartimento di Fisica, Universit\`a di Pavia - Via Bassi, 6, I-27100 Pava, Italy. \smallskip

\noindent$^2$ Istituto Nazionale di Fisica Nucleare - Sezione di Pavia, Via Bassi, 6, I-27100 Pava, Italy. \smallskip

\noindent$^3$ Dipartimento di Matematica, Universit\`a di Genova - Via Dodecaneso, 35, I-16146 Genova, Italy. \smallskip

\noindent$^4$ Istituto Nazionale di Fisica Nucleare - Sezione di Genova, Via Dodecaneso, 33 I-16146 Genova, Italy. \smallskip
\smallskip

\noindent E-mail: 
$^a$nicolo.drago@unipv.it, 
$^b$faldino@dima.unige.it,
$^c$pinamont@dima.unige.it\\ 

\normalsize

\par 
 
\rm\normalsize 

\rm\normalsize 
 
 
\par 
\bigskip 
 
\rm\normalsize 

\par 
\bigskip

\rm\normalsize

\bigskip

\noindent 
\small 
{\bf Abstract}.
We analyze the relative entropy of certain KMS states for scalar self-interacting quantum field theories over Minkowski backgrounds that have been recently constructed by Fredenhagen and Lindner in \cite{FredenhagenLindner} in the framework of perturbative algebraic quantum field theory. 
The definition we are using is a generalization of the Araki relative entropy to the case of field theories.  
In particular, we shall see that the analyzed relative entropy is positive in the sense of perturbation theory, hence, even if the relative modular operator is not at disposal in this context, the proposed extension is compatible with perturbation theory. 
In the second part of the paper we analyze the adiabatic limits of these states showing that also the density of relative entropy obtained dividing the relative entropy by the spatial volume of the region where interaction takes place is positive and finite.
In the last part of the paper we discuss the entropy production for states obtained by an ergodic mean (time average) of perturbed KMS states evolved with the free evolution recently constructed by the authors of the present paper. We show that their entropy production vanishes even if return to equilibrium \cite{Robinson, HKT} does not hold. This means that states constructed in this way are thermodynamically simple, namely they are not so far from equilibrium states.

\normalsize

\vskip .3cm

\section{Introduction}

Recently, Lindner in is PhD Thesis \cite{Lindner} and Fredenhagen and Lindner in \cite{FredenhagenLindner} have constructed equilibrium states in the adiabatic limit for a perturbatively constructed interacting quantum scalar field which propagates on a Minkowski spacetime. In this paper we show how relative entropy among equilibrium states for different interactions can be given within perturbation theory. 

The analysis performed in \cite{Lindner, FredenhagenLindner} 
is done in the context of perturbative algebraic quantum field theory (pAQFT) \cite{BDF, FredenhagenRejzner, FredenhagenRejzner2, HW01, HW02, HW03}.
According to the algebraic paradigm, the emphasis is posed on the set of observables of the theory and on the relations among them. In particular, the observables form a $*-$algebra that we indicate by $\mathcal{A}(M)$ where $M$ is the Minkowski spacetime\footnote{Many of the concepts presented here can be generalized to the case of globally hyperbolic spacetimes \cite{BaGi12}.} where fields propagate. A review of the construction of the $*-$algebra of the free scalar Klein Gordon field can be found e.g. in \cite{BFV}. 
The self-interaction of a scalar field $\phi$ is described by a Lagrangian density of the form
\[
\mathcal{L}  =  \partial_\mu \phi\partial^\mu \phi +m^2 \phi^2 + \lambda \mathcal{L}_I(\phi) 
\]
where $\mathcal{L}_I$ is the interaction Lagrangian density, which is assumed to be a formally selfadjoint constant section of the jet bundle constructed over $\phi$ which is not simply quadratic in the field. The corresponding non linearities in the field equations can be treated with perturbation theory.
According to pAQFT the observables of the interacting algebra $\mathcal{A}_I(M)$ can be represented as elements of $\mathcal{A}(M)[[\lambda]]$, the algebra of formal power series in the coupling constant $\lambda$ with coefficients in the free algebra ($\lambda=0$). This representation is realized  by the quantum M\o ller map, whose action is given by the Bogoliubov formula, 
\begin{equation}\label{eq:bogoliubov}
R_V(F)=S^{-1}\star (S \cdot_T F)\,,
\end{equation}
where $F$ is a local interacting field.
Furthermore $V$ is the smeared interaction Lagrangian $\mathcal{L}_I$ and $S$ is the time ordered exponential of the smeared interaction Lagrangian $V$.
Finally $\star$ denotes the product in $\mathcal{A}$ and $\cdot_T$ is the corresponding time ordered product. The algebra of both free and interacting theories can be analysed in a state independent way and this task is nowadays well understood. 

A state of the theory is described by a linear positive normalized functional over $\mathcal{A}(M)$. Once a state is chosen, the standard representation of any quantum theory in terms of operators on a Hilbert space can be recovered applying the celebrated Gelfand Neimark Segal (GNS) construction. The analysis of the state space in perturbation theory is not straightforward, in particular, having a state of the free algebra, we can obtain a state for the interacting algebra simply composing with the M\o ller map.  However, the physical meaning of the state is modified in this procedure. In particular, if one starts with an equilibrium state (a state which satisfies the Kubo Martin Schwinger (KMS) condition \cite{HHW}) over $\mathcal{A}(M)$ the composition with the M\o ller operator  leads to a state which does not satisfy the equilibrium property anymore. 

In \cite{Lindner, FredenhagenLindner} the authors managed to find the way to modify an equilibrium state for the free theory $\omega^\beta$ so to obtain an equilibrium state for the interacting theory.
This construction is an extension of the work of Araki \cite{Araki-KMS}, 
about the perturbation of KMS states in the context of von Neumann theories, 
see also \cite{BR}, to pAQFT. Furthermore, the formulation of equilibrium states proposed in those papers survives the adiabatic limits, namely the limit where the coupling constant tends to one, with the caveat that all the elements of the theory are understood in the sense of formal power series.  

In the literature, equilibrium states for interacting fields are constructed by means of Keldysh formalism 
see e.g. \cite{Landsman, LeBellac} and reference therein.  However, infrared problems at higher perturbative orders are not completely avoided in that manner. 
The imaginary time formalism is more efficient in this respect, in particular the perturbative construction of the partition function of the state can usually be performed with the imaginary time formalism, expanding the propagators over the Matsubara frequencies.
Unfortunately, in this formalism, the direct computation of the correlation in position space require a backward Wick rotation which is not completely under control.
Actually, a complete construction which extends the work of Araki was possible only thanks to some recent achievements in pAQFT.

In particular, the first new ingredient is the time slice axiom, by means of which, being $\mathcal{A}(M)$ the algebra of observables, a state over $\mathcal{A}(M)$ is fixed once it is known on the subalgebra $\mathcal{A}(\Sigma_\epsilon)$ of observables supported on a small neighborhood $\Sigma_\epsilon\simeq\Sigma\times(-\epsilon,\epsilon)$ of a Cauchy surface $\Sigma$ of $M$. 
This fact holds for both free and perturbatively constructed interacting quantum field theories \cite{CF}.

Second, the causal factorization properties of the $S-$matrix, used to construct the M\o ller map 
implies that algebras of interacting fields $\mathcal{A}_I(\Sigma_\epsilon)$ constructed perturbatively with interaction Lagrangians which coincides over 
$\Sigma_\epsilon$ are equivalent up to isomorphisms. 

For this reason,
in \cite{Lindner} and \cite{FredenhagenLindner}, the authors
considered interaction Lagrangians of the form 
\begin{equation}\label{eq:perturbation-lagrangian}
V_\chi^h = \int  \chi(t) h({\bf x}) \mathcal{L}_I(t,{\bf x})\;  d t\, d^3 {\bf x}\,,
\end{equation}
where $(t,{\bf x})$ are standard Minkowski coordinates used to parametrize the points of the Minkowski space $M$, $\chi\geq 0$ is a function of time which makes the support of $V_\chi^h$ past compact and it is constructed in such a way that $\chi=1$ on $J^+(\Sigma_\epsilon)$, $h\geq 0$ is a spatial cutoff. Furthermore, the interaction Lagrangian density $\mathcal{L}_I$ is assumed to be invariant under translations. In the subsequent part of the paper, we shall simply denote the smeared interaction Lagrangian with $V$ whenever the dependence on $\chi$ and $h$ is not strictly necessary for the understanding.
In this way, $V_{\chi}^{h}$ is a local field and thus the perturbative construction of interacting fields by means of the M\o ller map \eqref{eq:bogoliubov}, 
is both ultraviolet and infrared finite. 
The algebra of interacting fields in the adiabatic limit can be easily constructed by means of the observations given above considering the algebra $\mathcal{A}_I(\Sigma_\epsilon)$, generated by $R_V(\mathcal{F}_{\text{loc}})$, where $\mathcal{F}_{\text{loc}}$ are interacting local fields, and taking the inductive limit $h\to 1$ while keeping $\chi$ fixed \cite{BF00}.
We stress once more that although this procedure realizes the interacting algebra $\mathcal{A}_I(\Sigma_\epsilon)$ as a $*$-subalgebra of $\mathcal{A}(M)[[\lambda]]$, the $*$-algebra of formal power series in the coupling $\lambda$ with coefficients in $\mathcal{A}(M)$,
the algebras $\mathcal{A}_I(\Sigma_\epsilon)$ and $\mathcal{A}(M)[[\lambda]]$ are $*$-isomorphic.
As mentioned above this is again due to the time-slice axiom and to the fact that $R_V(F)=F$ if the support of $F$ lies in the past of the support of $V$.
In particular, this implies that states on $\mathcal{A}(M)$ can be regarded as states on $\mathcal{A}_I(\Sigma_\epsilon)$ and viceversa.

The perturbative construction of KMS states given by Araki in \cite{Araki-KMS} can be repeated for $\mathcal{A}_I(\Sigma_\epsilon)$ comparing the automorphisms of both the free and interacting time evolution. However, these states depend on the spatial cutoff $h$, in \cite{FredenhagenLindner} it is shown that the limit $h\to 1$ of the expectation values of interacting fields can be taken when the background free theory is a massive Klein Gordon field theory. The case of certain massless background theories as for the case $\mathcal{L}_I=\phi^4$ can be addressed adding the thermal mass present in that context to the background theories by means of the principle of pertrubative agreement, see \cite{HW05, DHP}. 

Standard results, which hold in the case of Quantum Statistical Mechanics, can be generalized to this case. 
As an example, the return to equilibrium property given in the context of $C^*-$dynamical systems in \cite{BrKiRo,Robinson} has been recently extended to the case of field theories in \cite{DFP}.

\bigskip

In this paper we address the relative entropy \cite{Araki, Araki2} and entropy production \cite{Oj0,Oj1, Oj2, JP01, JP02} among the above mentioned class of states.
In particular, we will exploit the time-slice axiom in order to identify $\mathcal{A}_I(\Sigma_\epsilon)$ as $\mathcal{A}(M)[[\lambda]]$, so that all states mentioned above can be compared within the same algebra.
Within this setting, we shall see that the relative entropy can be computed and it gives finite results.
Thus, the relative entropy is a concept compatible with perturbation theory.
Furthermore, we shall see that the relative entropy is positive in the sense of perturbation theory. This shows that at least in this case the relative entropy can be expressed by means of fields, contrary to the general case where the relative entropy is given in terms of the relative modular operator which is not available in the algebra of interacting field observables. 

More precisely, the formula for the relative entropy we shall give below \eqref{eq:relative-entropy-KMS-generalized} can be tested among copies of equilibrium states $\omega^{\beta, V_1}, \omega^{\beta, V_2}$ constructed with different interaction Lagrangians and, when $V_1=0$ and $V_2=V$, it reduces to the \[
\mathcal{S}(\omega^\beta, \omega^{\beta,V}) =    \omega^\beta(\beta K) +\log (\omega^\beta(U(i\beta)))\,,
\]
where $\log \omega^\beta(U(i\beta)))/\beta$ is the relative free energy and $K$, which is the generator of the cocycle which intertwines the free and interacting time evolution in $\mathcal{A}(M)[[\lambda]]$, plays the role of the perturbation of the Hamiltonian of the system. 
The formula is obtained by analogy with the Araki formula evaluated for different Araki states (see the appendix \ref{ap:rel-entropy}). We shall thus prove that the basic properties of the relative entropy are satisfied by it.

The formula for the relative entropy we give below   
in \eqref{eq:relative-entropy-KMS-generalized}
survives the thermodynamical limit ($h\to1$) if spacelike densities are considered, however, it can only be tested among the perturbatively constructed KMS states described above. 
In particular, evolving a perturbatively constructed KMS state with the free evolution by a time step $t$ and taking the limit where $t$ tends to infinity, one gets back the KMS state for the free theory when the interaction Lagrangian has compact spatial support. On the contrary, if the adiabatic limit is considered this does not hold anymore. Actually, in \cite{DFP} it was shown that the ergodic mean, namely the time average, of this states converges to a non equilibrium steady state (NESS). We would like to evaluate how far this NESS is from one of the perturbatively constructed KMS states and usually the (symmetric part of the) relative entropy could be used to reach this goal. Unfortunately, we cannot directly apply the given definition. To this end, in the second part of paper, we analyze the entropy production in these states following ideas present in \cite{Oj0,Oj1, Oj2, JP01, JP02}. We prove that the entropy production of the NESS described above vanishes. This implies that this state is not so far from an equilibrium state.

Indeed, the interest in non-equilibrium states and entropy production has increased in the recent past years both in Statistical Mechanics (see for instance \cite{Spohn Lebowitz}, where non-equilibrium steady states for a cristalline system is studied, or \cite{Spohn} for a definition of entropy production in the framework of quantum semigroups) and in Quantum Field Theory, in particular with reference to 2-dimensional Conformal Field Theory \cite{Doyon 2, Doyon 3} and, more recently \cite{Hollands Longo}, but also for the Klein-Gordon field \cite{Doyon 1}. 

\bigskip

The paper is organized as follows. First of all, in the second part of the introduction, we shall give a brief introduction to pAQFT and we recall the notion of relative entropy given by Araki.
In the second section we analyze the concept of relative entropy for KMS states of interacting theories for spatially compact perturbations and in the third section we discuss their adiabatic limit. Section 4 contains the discussion of the entropy production and of its use to evaluate how far from equilibrium  are certain non equilibrium steady states constructed by a time average of interacting KMS states evolved with the free evolution.
Some conclusion are presented in the fifth section. Finally, some minor or known useful results are collected in the appendix.

\subsection{Brief introduction to pAQFT}

Perturbative algebraic quantum field theory (pAQFT) is a recently developed framework in which ideas of algebraic quantum field theory \cite{Haag} are combined with methods proper of perturbation theory \cite{EG, Steinmann, Bogoliubov} to treat interacting quantum fields. 
The first paper where this formulation was presented is \cite{BDF}, and further developed in \cite{BF09, FredenhagenRejzner,FredenhagenRejzner2}. 
The analysis of interacting quantum field theory in the algebraic framework was previously formulated in 
\cite{BF00}, while the analysis of the curved case was presented in \cite{Kay Wald}, \cite{HW01, HW02, HW03}, \cite{BFV}.

\bigskip

In this paper we are interested in interacting scalar quantum field theory constructed by means of perturbation theory over free massive theories in a Minkowski spacetime $M$ whose metric $\eta$ is assumed to have signature $(-,+,+,+)$. The classical equation of motion of the theory we want to treat is the following 
\begin{equation}\label{eq:interacting-equation}
-\Box \phi + m^2 \phi +\lambda V^{(1)}(\phi) = 0\,,
\end{equation}
where $\Box=\nabla_\mu \nabla^\mu$ is the D'Alembert operator, $m$ is the mass of the field, $\lambda$ is a coupling constant which is often set to $1$ and $V^{(1)}(\phi)$ is a local interaction Lagrangian, it is actually the first functional derivative of \eqref{eq:perturbation-lagrangian} in the limit where both $\chi$ and $h$ are taken to be equal to $1$.
We briefly recall how this theory is treated in pAQFT.

The starting point is the choice of the space $\mathcal{C}$ of off-shell {\bf field configurations} which are assumed to be real-valued smooth functions over $M$, in particular, we shall denote the generic configuration as follows
\[
\phi\in\mathcal{C}:= {C}^{\infty}(M;\mathbb{R})\,.
\]
The set of {\bf observables} we are looking for are the (non linear) functionals over the field configurations. We shall restrict our attention to the functionals which admit functional derivatives to all order, which have functional derivatives that are compactly supported distributions and whose wave front set is microcausal. More precisely, the set of {\bf microcausal functionals} is denoted by 
\[
\mathcal{F}_{\mu c} := \left\{ F:\mathcal{C}\to \mathbb{C}  \left|  F^{(n)}\in \mathcal{E}(M^{n}), \,\WF (F^{(n)}) \cap ({\overline{V}^+}^{n}\cup {\overline{V}^-}^{n}  ) = \emptyset    , \,  \forall n\right. \right\}
\]
where $(x_1,\dots, x_n;p_1,\dots, p_n)\in {\overline{V}^\pm}^{n}\subset T^*{M^n}$ if, for every $i$, $p_i$ is a causal covector respectively future/past directed. Sometimes it is requested that the elements of $\mathcal{F}_{\mu c}$ have only finitely many nonvanishing functional derivatives. If this is the case, only polynomial interactions can be treated with the formalism. If more complicated interactions are treated, it is necessary to introduce some notion of convergence for having a well-defined quantum non commutative product, see \textit{e.g.} \cite{BDF}.
When equipped with the pointwise product and with the complex conjugation as involution, $\mathcal{F}_{\mu c}$ forms the commutative $*-$algebra of classical field observables. Relevant subsets of this algebra are the set of {\bf local functionals} 
\[
\mathcal{F}_{\text{loc}} := \left\{ F\in \mathcal{F}_{\mu c} \left|\,  \supp F^{(n)}\subset D_n , \;  \forall n\in\mathbb{N}\right. \right\}\,,
\]
where $D_n\subset M^n$ is the diagonal of $M^n$, namely the set of points $(x,\dots, x)\in M^n$, and the set of regular functionals
\[
\mathcal{F}_{\text{reg}} := \left\{ F\in \mathcal{F}_{\mu c} \left|  {F^{(n)}(\phi)\in C^\infty_c(M^n),\;\forall n\in\mathbb{N},\;\forall\phi\in\mathcal{C}}\right. \right\}.
\]
The canonical quantization of $(\mathcal{F}_{\mu c},\cdot)$ is realized by deforming the pointwise product.
In the case of a free theory ({\textit{i.e.} when} $\lambda=0$ in \eqref{eq:interacting-equation})
{the star product is defined as}
\[
F\star_\omega G := e^{\hbar\langle \omega,\frac{\delta^2}{\delta \varphi\delta \varphi'} \rangle }\left. F(\varphi) G(\varphi')\right|_{\varphi'=\varphi}\,,
\]
where $\omega\in\mathcal{D}(M^2)'$ is an Hadamard bidistribution, namely, a weak bisolution of the equation of motion up to smooth functions, whose antisymmetric part $\omega(x,y)-\omega(y,x)=i\Delta(x,y)$ is proportional to the causal propagator $\Delta=\Delta_R-\Delta_A$, namely the retarded $\Delta_R$ minus advanced $\Delta_A$ fundamental solutions of the theory. Furthermore, the wave front set of $\omega$ is {\bf microcanonical} \cite{Radzikowski, BFK} so that, at each order in $\hbar$ the $\star_\omega$ product of microcausal functionals is well defined. From now on we shall set $\hbar=1$ if not strictly necessary for the understanding. The set of microcausal functionals equipped with the star product $\mathcal{A}:=(\mathcal{F}_{\mu c},\star_\omega)$ is the $*-$algebra of quantum observables of the free theory. The set of local non linear functionals contains the Wick polynomials of the theory. Algebras constructed with different Hadamard bidistributions are isomorphic, however, single non linear local functionals are not left invariant by this isomorphism.
{For this reason}, the representation of objects like the Wick square or the stress tensor needs to be carefully discussed \cite{HW01,HW05,Moretti}.

In order to implement interactions perturbatively, a time ordering map $T$ needs to be introduced \cite{BF00, EG, HW01, HW02}.
It maps multilocal functionals to microcausal functionals
\[
T\colon{\bigoplus_{n\geq 0}\mathcal{F}_{\textrm{loc}}^{\otimes n}}\to \mathcal{F}_{\mu c}\,,
\]
we refer to \cite{HW01} for its precise definition and for the analysis of the freedom in its construction. The key property for our analysis will be 
the casual factorization property which implies that
\[
T(F,G) = F\star G, \qquad   \; \qquad \text{if}  \qquad F \gtrsim G\,,
\] 
where $F \gtrsim G$ holds if the support of $F$ is not contained in $J^-(\supp G)$. 
Having a time ordering map, a time ordered product can be obtained as $F\cdot_T G = T(T^{-1}(F),T^{-1}(G))$, and hence, $S(F):=\exp_T ({i}F)$, the $S-$matrix of a local functional $F$, can be constructed. Finally the Bogoliubov map  
\[
R_V(F) = S(V)^{-1}\star (S(V) \cdot_T  F )
\]
is used to map the interacting local fields into the algebra of free fields. In general, both $S(F)$ and $R_V(F)$ are given in terms of power series in the coupling constant and thus they are meaningful only in the sense of perturbation theory and when the support of $V$ is compact. 

If the support of $F$ is compact, the causal factorisation property permits to understand the limit of $R_{V_\chi^h}(F)$ where both $\chi$ and $h$ in \eqref{eq:perturbation-lagrangian} tends to $1$ as an inductive limit within $\mathcal{A}$ to all order in perturbation theory. This limit is called {\bf algebraic adiabatic} limit \cite{BF00}. 

The Bogoliubov map is also used to represent the interacting evolution in the free theory. In particular, let $\alpha_t(F)$ be the one parameter group of $*-$automorphisms which represents the free time evolution and which is defined on a local functional $F$ as follows
\[
\alpha_t(F)(\varphi) := F_t(\varphi) := F(\varphi_t)\,,
\]
where, in the fixed Minkowski coordinate system we have chosen, $\varphi_\tau(t,{\bf x}) := \varphi(t-\tau,{\bf x})$. The one parameter group of $*-$automorphisms representing the interacting time evolution is obtained as 
\[
\alpha^V_t(R_V(F)) := R_V(\alpha_t F)\,.
\]
The free and interacting evolution can be interwined by a cocycle $U(t)$ which has been explicitly constructed in \cite{FredenhagenLindner} for observables supported in the region where $\chi$ in $V^h_\chi$ in \eqref{eq:perturbation-lagrangian} is equal to $1$. More precisely
\begin{equation}\label{eq:free-int-evol}
\alpha^V_t(F) = U(t) \star \alpha_t(F) \star U(t)^*\,,
\end{equation}
hence 
\[
U(t+s) = U(t) \star \alpha_t(U(s)).
\]
Furthermore, it satisfies the following equation
\begin{equation}\label{eq:cocycle-K}
\frac{d}{dt} U(t) = {iU(t)\star\alpha_t(K)}\,,
\end{equation}
where the generator $K=K^h=K^{h}_\chi$ of the cocycle is related to $V^h_\chi$ in \eqref{eq:perturbation-lagrangian} as follows  
\begin{equation}\label{eq:gener-coc}
K^h_\chi :=   
R_{V^h_\chi}(\dot{V}^h_\chi), \qquad   \dot{V}^h_\chi  :=   
\int  \dot\chi(t) h({\bf x}) \mathcal{L}_I(t,{\bf x})  dt d^3 {\bf x}\,,
\end{equation}
a direct formula for $\alpha_t^V$ in terms of $K$ can be found in {the appendix, see} \eqref{eq:int-dynamics}.

\subsubsection{KMS states}

Equilibrium states at inverse temperature $\beta$ with respect to the time evolution $\alpha_t$ are assumed to satisfy the Kubo Martin Schwinger (KMS) condition, see \textit{e.g.} \cite{HHW}. We shall now recall the construction of certain KMS states recently presented by Fredenhagen and Lindner \cite{FredenhagenLindner}. 
The state of the free theory we shall start with is $\omega^\beta$, \textit{i.e.} the KMS state which is extremal in the set of KMS states for the free theory. $\omega^\beta$ is a quasi-free state whose two-point function is the following
\begin{equation}\label{eq:2pt-kms-free}
\omega_2^{\beta}(x,y) := \frac{1}{(2\pi)^3} \int dp \;\sigma(p_0) \delta(p^2+m^2)  \frac{1}{1-e^{-\beta p_0}}e^{ip(x-y)}\,,
\end{equation}
where $\delta$ is the Dirac delta function and $\sigma$ is the sign function.
This two-point function is an Hadamard bidistribution and thus, we shall use it to construct the star product of the algebra of the free theory. Hence, from now on $\mathcal{A} =(\mathcal{F}_{\mu c},\star) = (\mathcal{F}_{\mu c},\star_{\omega_2^\beta})$. 

The state for the interacting algebra obtained applying directly $\omega^\beta$ on $R_V(F_1)\star \dots \star R_V(F_n)$ is not a KMS state with respect to $\alpha_t^V$. Nevertheless, Fredenhagen and Lindner \cite{FredenhagenLindner} have shown that using the cocycle $U(t)$ constructed above, the Araki construction of KMS states \cite{Araki-KMS} for perturbed systems can be applied. Hence we may introduce
\begin{equation}\label{eq:int-KMS-state}
\omega^{\beta,V}(A) := \frac{\omega^{\beta}(A\star U(i\beta))}{\omega^{\beta}(U(i\beta))}\,,
\end{equation}
and this is a KMS state for the interacting theory. A direct expansion of $\omega^{\beta,V}(A)$ in terms of free connected $n-$point functions was also given in \cite{Lindner, FredenhagenLindner}. For completeness, this direct representation is recalled in equation \eqref{eq:int-KMS-state-connected} in the appendix. 

The construction recalled above is well-posed, in the sense of perturbation theory, outside the adiabatic limit, namely when $h$ has compact support. 
At the same time, the construction of $U$ depends on the cutoff function $h$ and hence, $\omega^{\beta, V}$ depends on $h$. Hence, the analysis of the well posedness of the state under the adiabatic limit needs to be discussed carefully. 
Adiabatic limits of $\omega^{\beta,V}(A)$ have been analyzed in \cite{FredenhagenLindner} and in \cite{Lindner}. They have shown that,
for the case of a massive quantum field theory, the limit $h\to1$ of $\omega^{\beta,V}(A)$ can be taken in the sense of van Hove. More precisely, according to Definition 2 in \cite{FredenhagenLindner}, a van Hove sequence of cutoff functions is a sequence of compactly supported smooth functions $\{h_n\}_{n\in\mathbb{N}}$ such that 
\begin{align}\label{Equation: definining properties of van Hove sequence}
	0\leq h_{n}\leq 1\,;\qquad
	h_n(x)=1, \quad |x|<n\,; \qquad 
	h_n(x)=0, \quad |x|>n+1 \,.
\end{align}
We say that {a functional $C^\infty_c(\mathbb{R}^3)\ni h\mapsto f(h)\in\mathbb{C}$} converges to $L$ in the sense of van Hove if 
$\lim_{n\to\infty}f(h_n) =L$ for every van Hove sequence and we shall indicate it as
\[
\vlim_{h\to1}f(h)=L\,.
\]
Notice that, for each $V$, the state
$\omega^{\beta,V}$ given in \eqref{eq:int-KMS-state} can be interpreted as a state for the free theory. More precisely, since $U$ is defined as a formal power series in the coupling constant, for every compactly supported $V$ $\omega^{\beta,V}$ is a state for $\mathcal{A}(M)[[\lambda]]$. This property is preserved in the adiabatic limit.
Furthermore, in a similar way, the interacting time evolution $\alpha_t^V$ given in \eqref{eq:free-int-evol} can be seen as $*-$automorphisms of $\mathcal{A}(M)[[\lambda]]$ and 
hence $\omega^{\beta,V}$ are KMS-states with respect to $\alpha_t^V$ at inverse temperature $\beta$ also on $\mathcal{A}(M)[[\lambda]]$.

\subsection{Relative entropy for $C^*-$dynamical systems}

Let us consider a von Neumann algebra $\mathfrak{A}\subset{\mathfrak{B}{\mathcal{H}}}$ and two normal states $\Psi$ and $\Phi$.
The Araki relative entropy \cite{Araki, Araki2, BR, Donald} 
is defined as minus\footnote{We are adopting the definition of Araki, it differs by a minus sing from \cite{BR}. In this way the relative entropy is positive.} the logarithm of the relative modular operator. 
More precisely, the construction starts with the operator $S$, defined as the closure of the operator 
\[
SA\Psi = A^*\Phi , \qquad A\in \mathfrak{A}.
\]
The relative modular operator is then obtained as 
\[
\Delta_{\Psi,\Phi} := S^*S
\]
and the relative entropy is 
\begin{equation}\label{eq:Araki-relative-entropy}
\mathcal{S}(\Psi,\Phi) := -(\Psi, \log(\Delta_{\Psi,\Phi}) \Psi). 
\end{equation}
Unfortunately, this formula cannot be directly applied in the context of pAQFT to test the relative entropy between interacting KMS states \eqref{eq:int-KMS-state} constructed with different interactions Lagrangians because neither the generators of the free or interacting evolutions nor the corresponding modular or relative modular operators are at disposal in $\mathcal{A}$. 
What is available in pAQFT is the relation between the free and interacting evolution given in \eqref{eq:free-int-evol}. Hence, below, we shall generalize the definition of relative entropy specialized to the case of perturbatively constructed KMS states. 
To this end, in appendix \ref{ap:rel-entropy} we derive some expressions for \eqref{eq:Araki-relative-entropy} which involves only the generators of the cocycles intertwining free and interacting evolutions.

\section{Relative entropy in pAQFT}

To discuss thermodynamical properties of states, a notion of entropy is very helpful. In the literature, for the case of $C^*-$dynamical systems whose algebras of observables are von Neumann algebras, also known as $W^*-$dynamical systems, the definition of relative entropy was given by Araki in \cite{Araki}. It plays a key role in the description of thermodynamical relations among states.
Unfortunately, that definition makes use of the relative modular operator and the latter is not at disposal in the context of field theory. 
For this reason, here we generalize that definition in certain cases and we prove that the generalized relative entropy has similar properties as the original extent. 
\begin{definition}\label{def:relative-entropy-KMS-generalized}
Let $V_1$, $V_2$ and $V_3$ be three spatially compact and past compact real perturbation potentials of the form \eqref{eq:perturbation-lagrangian} with fixed $\chi$ and $h$. 
Consider $\omega^{\beta,V_1}$ and $\omega^{\beta,V_3}$ the KMS states obtained by means of perturbation theory from $\omega^\beta$, which is the extremal KMS state  with respect to the time $\alpha_t$ at inverse temperature $\beta$. Consider $\alpha^{V_2}_t$ the one parameter group of $*-$automorphisms obtained perturbing the time evolution $\alpha_t$ with $V_2$.
The {\bf relative entropy} between $\omega^{\beta,V_1}\circ{\alpha_t^{V_2}}$ and $\omega^{\beta,V_3}$ is defined as
\begin{gather}
\mathcal{S}(\omega^{\beta,V_1}\circ{\alpha_t^{V_2}}, \omega^{\beta,V_3}) :=  -\omega^{\beta,V_1}(\beta K_1-\beta K_2) + 
\omega^{\beta,V_1}(\alpha_t^{V_2}(\beta K_3-\beta K_2))
\notag
\\
\label{eq:relative-entropy-KMS-generalized}
-\log (\omega^{\beta}(U_1(i\beta)))+\log (\omega^{\beta}(U_3(i\beta)))\,,
\end{gather}
where $K_i$ are the generators associated to $V_i$ as in \eqref{eq:gener-coc} and $U_i$ the corresponding co-cycles \eqref{eq:cocycle-K}. 
\end{definition}
First of all notice that the expression \eqref{eq:relative-entropy-KMS-generalized} is defined in terms of formal power series in the coupling constant $\lambda$. In particular, $\log(\omega^{\beta,V_1}(U_1(i\beta))) $ can be computed in terms of the connected functions as given in section 4.4.2 of the PhD thesis of Lindner  \cite{Lindner}, see also \eqref{eq:log-cocycle} in the appendix.
As it is clear from the discussion presented in the introduction and in appendix \ref{ap:rel-entropy}, this definition is a generalization of the Araki relative entropy given in \eqref{eq:rel-entropy-vN} to the case of KMS states of a perturbatively constructed quantum field theory. Before analyzing some properties of that expression we notice that in the case where $t$ and either $V_1$ or $V_3$ vanish 
the definition \ref{def:relative-entropy-KMS-generalized} and in particular \eqref{eq:relative-entropy-KMS-generalized} give
\begin{gather*}
\mathcal{S}(\omega^\beta, \omega^{\beta,V}) =    \omega^\beta(\beta K) +\log (\omega^\beta(U(i\beta)))\,,\\
\mathcal{S}(\omega^{\beta,V},\omega^\beta) =  -\omega^{\beta,V}(\beta K) -\log (\omega^\beta(U(i\beta)))\,.
\end{gather*}
\noindent
{\bf Remark} 
Notice that, as discussed in the introduction, $\log (\omega^\beta(U(i\beta)))/\beta$ is nothing but the difference of the free energies in the states $\omega^\beta$ and $\omega^{\beta,V}$. At the same time $K$ is the generator of the co-cycle $U(t)$ which intertwines the time evolutions of $\omega^\beta$ and $\omega^{\beta,V}$. Hence, the previous two expressions recall the definition of entropy as the difference of the internal and free energies multiplied by $\beta$. This is in accordance with the thermostatic formalism introduced in \cite[Section 4.4.]{Lindner}, where, in addition, the first non-trivial order for the free energy is computed, finding an agreement with the results present in the physical literature. This suggests that the present definition should have a direct counterpart in the standard perturbative QFT language. There, thermal equilibrium states are built using the Keldysh contours formalism and perturbative expansions of the propagators through the Matsubara formalism. A direct connection between the two formalisms will be the subject of future investigations.
\\

The generalized relative entropy for perturbed KMS states described so far has similar properties as those shown by the Araki in \cite{Araki} for the case of the relative entropy of states of von Neumann algebras.
Actually the following proposition holds. 

\begin{proposition}\label{pr:rel-entropy-positive}
The generalized relative entropy $\mathcal{S}(\omega^{\beta,V_1}\circ{\alpha_t^{V_2}}, \omega^{\beta,V_3})$ 
 satisfies the following properties:
\begin{itemize}
\item[a)] (Quadratic quantity) the lowest order contribution both in $K_i$ (which are related to $V_i$ as in \eqref{eq:gener-coc}) and in the coupling constant $\lambda$ in 
$\mathcal{S}(\omega^{\beta,V_1}\circ{\alpha_t^{V_2}}, \omega^{\beta,V_3})$ is the second.
\item[b)] (Positivity) $\mathcal{S}(\omega^{\beta,V_1}\circ{\alpha_t^{V_2}}, \omega^{\beta,V_3})$ is positive in the sense of formal power series for every $t$ when $V_1\neq V_3$ or for $t\neq0$ when $V_1=V_3\neq V_2$ and it vanishes in the remaining cases.
\item[c)] (Convexity)   $\mathcal{S}(\omega^{\beta,V_1}\circ{\alpha_t^{V_2}}, \omega^{\beta,V_3})$ is convex in $V_1$, in $V_2$ and also in $V_3$ in the sense of formal power series.
\item[d)] (Continuity) $\mathcal{S}(\omega^{\beta,V_1}\circ{\alpha_t^{V_2}}, \omega^{\beta,V_3})$ is continuous in $V_i$ in the sense of formal power series with respect to the topology of microcausal functionals. 
\end{itemize}
\end{proposition}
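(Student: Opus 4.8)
The plan is to verify the four items by a term-by-term analysis in $\lambda$ of the right-hand side of \eqref{eq:relative-entropy-KMS-generalized}, relying on the cocycle equation \eqref{eq:cocycle-K}, on the $\alpha_t$-invariance and the KMS property of $\omega^\beta$, and on the modular structure of $\omega^\beta$ \emph{for the free theory} $\cA=(\muc,\star)$, which is at our disposal precisely because $\omega^\beta$ is a state of the free algebra. As a preliminary I would iterate \eqref{eq:cocycle-K} to get $U(t)=\Id+i\int_0^t\alpha_s(K)\,ds+O(\lambda^2)$ and then, using invariance of $\omega^\beta$ and the analyticity of $s\mapsto\omega^\beta(\alpha_s(\,\cdot\,))$, deduce that $\log\omega^\beta(U(i\beta))=-\beta\,\omega^\beta(K^{(1)})+O(\lambda^2)$, with the $O(\lambda^2)$ part of $\log\omega^\beta(U(i\beta))$ depending only on the first-order part $\kappa$ of $K$ through an iterated contour integral over $[0,i\beta]$ of $\omega^\beta(\tilde\kappa\star\alpha_{i\tau}(\tilde\kappa))$, $\tilde\kappa:=\kappa-\omega^\beta(\kappa)$; this is the perturbative content of the rewriting of the Araki formula carried out in the appendix.

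Item (a) is then immediate: the $O(\lambda^0)$ term of \eqref{eq:relative-entropy-KMS-generalized} vanishes since all cocycles reduce to $\Id$ and all $K_i$ to $0$, and for the $O(\lambda^1)$ term one replaces $\omega^{\beta,V_1}\to\omega^\beta$, $\alpha_t^{V_2}\to\alpha_t$, $K_i\to K_i^{(1)}$; the $\alpha_t$-invariance of $\omega^\beta$ removes the $t$-dependence, the two logarithmic terms contribute $\mp\beta\,\omega^\beta(K_i^{(1)})$ by the step above, and the six resulting terms cancel in pairs. Hence $\mathcal{S}(\omega^{\beta,V_1}\circ\alpha_t^{V_2},\omega^{\beta,V_3})$ begins at order $\lambda^2$.

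For item (b) the core of the work is to compute the $O(\lambda^2)$ coefficient of \eqref{eq:relative-entropy-KMS-generalized}. Expanding $U_1(i\beta)$, $U_3(i\beta)$, $\omega^{\beta,V_1}$ and $\alpha_t^{V_2}$ to second order and repeatedly using the KMS relation and the invariance of $\omega^\beta$, I expect every contribution containing a \emph{second-order} part $K_i^{(2)}$ to drop out, so that the leading term depends only on the first-order parts $\kappa_1,\kappa_2,\kappa_3$ and on $t$, and I expect the surviving iterated contour integrals to reorganise into
\[
\mathcal{S}(\omega^{\beta,V_1}\circ\alpha_t^{V_2},\omega^{\beta,V_3})=\tfrac12\,\mathcal{Q}_\beta(v,v)+O(\lambda^3),
\]
where $\mathcal{Q}_\beta$ is the Kubo--Mori/Bogoliubov bilinear form attached to $\omega^\beta$ — positive by functional calculus in the modular operator $\Delta_\beta$ of $\omega^\beta$ — and $v$ is a vector of the GNS space $\cH_\beta$ of $\omega^\beta$ that encodes the difference of the first-order deviations from $\omega^\beta$ of the states $\omega^{\beta,V_1}\circ\alpha_t^{V_2}$ and $\omega^{\beta,V_3}$: concretely $v$ is a linear combination of $\pi_\beta(\tilde\kappa_1-\tilde\kappa_3)\Omega_\beta$ and of a commutator vector built from $\tilde\kappa_2$ and the finite time $t$. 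Positivity in the sense of formal power series then follows from $\mathcal{Q}_\beta\geq0$; the leading coefficient vanishes iff $v=0$, and inspecting $v$ shows this happens exactly when $V_1=V_3$ and in addition $t=0$ or $V_1=V_2$, i.e.\ exactly when $\omega^{\beta,V_1}\circ\alpha_t^{V_2}=\omega^{\beta,V_3}$ — in which case $\mathcal{S}\equiv0$ by direct inspection of \eqref{eq:relative-entropy-KMS-generalized} — while otherwise the first non-vanishing coefficient is strictly positive. This second-order computation, and in particular the correct identification of the commutator vector produced by $\alpha_t^{V_2}$ after the contour deformations on $[0,i\beta]$, is the main obstacle; everything else follows once it is in place.

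Items (c) and (d) then come from the same leading expression together with soft arguments. For (c), first-order perturbation theory being linear, the deviation vector $v=v(V_1,V_2,V_3)$ is affine in each of $V_1,V_2,V_3$, so $\tfrac12\mathcal{Q}_\beta(v,v)$ is a convex quadratic functional of each $V_i$ and the convexity defect of $\mathcal{S}$ starts at order $\lambda^2$ with the non-negative value $\tfrac12\mu(1-\mu)\,\mathcal{Q}_\beta(v',v')$, $v'$ the difference of the two deviation vectors. For (d), at each order in $\lambda$ the coefficient of \eqref{eq:relative-entropy-KMS-generalized} is built from the $V_i$ by finitely many operations continuous for the topology of $\muc$ — forming $\dot V_i$ and $K_i=R_{V_i}(\dot V_i)$ (continuity of the Bogoliubov and time-ordered maps), $\star$-products, applications of $\alpha_t$ and $\omega^\beta$, the finite contour integrations over $[0,i\beta]$, and the polynomial combinations from the logarithms — so each coefficient is a continuous functional of the $V_i$.
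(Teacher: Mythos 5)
Your strategy for items (a), (c) and (d) is essentially the paper's, and your route to (b) via the Kubo--Mori/Duhamel form is precisely the ``alternative shorter proof'' that the paper records in a remark after its main argument: there the second-order coefficient is rewritten as $\tfrac{\beta^2}{2}(F\,|\,F)_\beta$ with $(A\,|\,B)_\beta=\tfrac1\beta\int_0^\beta\omega^{c,\beta}(A^*\otimes\alpha_{iu}(B))\,du$ and $F$ a single formally self-adjoint combination of $\alpha_{\pm t/2}(K_1+K_3-2K_2)$ and $\alpha_{\pm t/2}(K_1-K_3)$. Up to that point your proposal is fine (modulo the small slip that the $O(\lambda^2)$ part of $\log\omega^\beta(U(i\beta))$ also contains $-\beta\,\omega^\beta(K^{(2)})$; these second-order-in-$\lambda$, first-order-in-$K$ pieces cancel for the same reason as in your item (a), not because they are absent from the logarithm).

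The genuine gap is the strict-positivity step. Positive semi-definiteness of the Duhamel form only gives $\mathcal{S}\geq 0$ at leading order; your claim that ``the leading coefficient vanishes iff $v=0$'' presupposes non-degeneracy of $\mathcal{Q}_\beta$ on the relevant observables, and that is exactly what cannot be invoked abstractly here: $\mathcal{A}$ is a $*$-algebra of unbounded functionals, the modular operator of $\omega^\beta$ is not available as an operator on which to do functional calculus (the paper stresses this), and the element whose ``norm'' appears is not a fixed GNS vector. After the contour deformations the second-order coefficient takes the form \eqref{eq:quadratic-rel-entropy}, an integral over momenta of $\frac{\sinh(f\beta/2)}{f}\,|\widehat{F}|^2$ with $F=\sin(\tfrac{ft}{2})(K_1+K_3-2K_2)+i\cos(\tfrac{ft}{2})(K_1-K_3)$ as in \eqref{eq:def-F} and $f=\sum_k p_{k0}$ momentum-dependent, so the $t$-dependence enters through oscillating weights and one must show the integrand is nonzero on a set of positive measure. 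The paper does this by isolating the partition $E_-=\emptyset$ (where $f=\sum_k w_k>0$) and checking that $\cos^2(\tfrac{ft}{2})|\widehat{K_1}-\widehat{K_3}|^2$, respectively $\sin^2(\tfrac{ft}{2})|\widehat{K_1}+\widehat{K_3}-2\widehat{K_2}|^2$, is positive for almost every momentum configuration when $K_1\neq K_3$, respectively $K_1=K_3\neq K_2$ and $t\neq0$. Without carrying out this explicit Fourier representation --- i.e.\ the very computation you defer as ``the main obstacle'' --- the ``iff'' in your vanishing criterion is unsupported. A smaller omission: positivity in the sense of formal power series also requires the higher-order coefficients to be real, which the paper obtains from the formal self-adjointness of the $K_i$; you should say a word about this.
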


\bigskip

\begin{proof}
$a)$ Let us start observing that  
\begin{equation}\label{eq:entr-exp-t}
\mathcal{S}(\omega^{\beta,V_1}\circ{\alpha_t^{V_2}}, \omega^{\beta,V_3}) = \mathcal{S}(\omega^{\beta, V_1},\omega^{\beta, V_3}) 
+ \omega^{\beta,V_1}((\alpha_t^{V_2} - \alpha_t^{V_1})(\beta K_3-\beta K_2)).
\end{equation}
Expanding $\omega^{\beta,V_i}$ and $\omega^{\beta}(\log(U_i(i\beta)))$ in equation \eqref{eq:relative-entropy-KMS-generalized} with \eqref{eq:int-KMS-state-connected} and \eqref{eq:log-cocycle} we obtain the following expansion in powers of $K$:

\begin{gather}
\mathcal{S}(\omega^{\beta,V_1},\omega^{\beta,V_3}) =     
\int_0^\beta du \; \omega^{c,\beta}(\beta K_1 \otimes  \alpha_{iu}K_1)
-\int_0^\beta du \; \omega^{c,\beta}(\beta K_3 \otimes  \alpha_{iu}K_1)   
\notag
\\
\label{eq:second-order}
- \int_{\beta S_2} dU  \omega^{c,\beta}(\alpha_{iu_1}K_1\otimes \alpha_{iu_2}K_1)  
+ \int_{\beta S_2} dU  \omega^{c,\beta}(\alpha_{iu_1}K_3\otimes \alpha_{iu_2}K_3)  
+ O(K^{\otimes 3})\,,
\end{gather}
where we recall that $S_2:=\{(u_1,u_2)\in\mathbb{R}^2|\;0\leq u_1\leq u_2\leq\beta\}$, cfr. equation \eqref{Equation: definition of n-th simplex}, while $\omega^{\beta,c}$ denotes the connected part of the state $\omega^\beta$, see equation \eqref{Equation: definition of connected part}.
Furthermore, in view of \eqref{eq:int-dynamics},
\begin{gather}
\omega^{\beta,V_1}((\alpha_t^{V_2} - \alpha_t^{V_1})(\beta K_3-\beta K_2)) = 
\notag
\\
= - i\beta\int_0^t ds \; \left(  \omega^{c, \beta}(\alpha_s(K_1-K_2)  \otimes  \alpha_{t}(K_3-K_2)))   -  \omega^{c, \beta}(\alpha_t(K_3-K_2)  \otimes  \alpha_{s}(K_1-K_2 ))) \right)
\notag
\\
\label{eq:second-order-bis}
+ O(K^{\otimes 3}).
\end{gather}
Since $K$ is at least of order $1$ in $\lambda$, equations \eqref{eq:entr-exp-t}, \eqref{eq:second-order} and \eqref{eq:second-order-bis} prove $a)$. 

\bigskip
$b)$ In order to prove that $\mathcal{S}(\omega^{\beta,\lambda V_1}\circ \alpha_t^{{\lambda}V_2},\omega^{\beta,\lambda V_3})$ is positive in the sense of formal power series we have to be sure that the lowest order contribution in the coupling constant is positive and that the higher contributions are real. 
Notice that every term in the expansion in powers of $K$ in $\mathcal{S}(\omega^{\beta,\lambda V_1}\circ \alpha_t^{{\lambda}V_2},\omega^{\beta,\lambda V_3})$ is real because $K_i$ is formally selfadjoint for every $i$. 
If we prove that  the second order in $K$ in \eqref{eq:entr-exp-t}, which is obtained from \eqref{eq:second-order} and from \eqref{eq:second-order-bis}, is strictly positive, we prove the sought positivity, because the lowest contribution in the $\lambda$ expansion of the second order expansion in $K$ remains positive. Notice the second order contributions in $K$ are given in \eqref{eq:second-order} and in \eqref{eq:second-order-bis} in terms of connected functions with two entries. 
We thus proceed analyzing the following connected functions for any copies of formally selfadjoint microcausal functionals $A,B$ 
\[
\omega^{c,\beta}(A\otimes B) = \sum_{l} \frac{1}{l!}  \left. D^l_{12} (A\otimes B) \right|_{(0,0)} :=
\sum_{l} \frac{1}{l!}  \left. \langle  A^{(l)}, (\omega_2^\beta)^{\otimes l}B^{(l)} \rangle_{l}\right|_{(\phi,\phi)=(0,0)},
\]
where
$\langle\,,\rangle_l$ denotes the standard pairing between smooth functions over $M^l$, which is tacitly extended to distributions. Moreover,
$\omega_2^\beta$ is the operator obtained by the Schwartz kernel theorem from the two-point function of the free KMS state at temperature $\beta$ given in \eqref{eq:2pt-kms-free}, hence
\[
\omega^{c,\beta}(\alpha_{iu_1}A\otimes \alpha_{iu_2}B) = \sum_{l\geq 1} \frac{1}{l!}  \int dP_l  \left(\prod_{j=1}^{l}  \frac{e^{-w_j(u_2-u_1) }\lambda_+(p_j)+e^{w_j(u_2-u_1 -\beta) }\lambda_-(p_j)}{2w_j (1-e^{-\beta w_j})}\right)  \widehat{\Psi_l}(-P_l,P_l)\,,
\]
where $P_l=(p_1,\dots, p_l)$ with $p_j=(p_{j0},{\bf p}_j)\in\mathbb{R}^4$ and $w_j=\sqrt{{\bf p}_j^2+ m^2}$.
Furthermore, $\lambda_\pm(p_j) = \delta(p_{j0}\mp w_j)$ where $\delta$ is the Dirac delta function and thus $\lambda_\pm(p_j)$ impose the restriction on the positive or negative mass shell of the domain of the $p_j-$integration. 
Finally, $ \widehat{\Psi_l}$ is the Fourier transform of the distribution
\[
\Psi_l(X,Y) = \left. A^{(l)}(X)\otimes B^{(l)}(Y) \right|_{(\phi,\phi)=(0,0)}, \qquad  X,Y\in M^{l}\,.
\]
Notice that, since $A,B$ are formally self-adjoint,  $\widehat{A^{(l)}}(-P)=\overline{\widehat{A^{(l)}}(P)}$. The integrals over every $p_{i0}$ can now be performed thanks to the delta functions supported on the mass shells which are present in $\lambda_\pm$. We obtain
\begin{gather}
\omega^{c,\beta}(\alpha_{iu_1}A\otimes \alpha_{iu_2}B) = 
\notag
\\
\label{eq:simpl-posit}
\sum_{l\geq 1} \frac{1}{l!}  \int d{\bf P}_l 
\prod_{j=1}^{l} 
\left(
\frac{e^{-\frac{\beta}{2} w_{j} } }{2w_{j} (1-e^{-\beta w_{j}})}
\right) 
\sum_{\{E_+,E_-\}} 
\left.
e^{-\sum_{k} p_{k0}\left(u_2-u_1 -\frac{\beta}{2}\right)  }
\widehat{\Psi_l}(-P_l,P_l)\right|_{\substack{ p_{a 0}= \pm w_{a}\\ \forall a\in E_\pm}
}
\end{gather}
where the sum is taken over all possible partitions of $\{1,\dots, l\}$ in two subsets $\{E_+,E_-\} \in P_2\{1,\dots, l\}$ which can also be empty. 

\bigskip

Let us start using equation \eqref{eq:simpl-posit} to expand the second order contributions in \eqref{eq:second-order}. 
Notice that 
the integrals over $u_1$, $u_2$ and $u$ can be taken before the integration over $P$ because $\widehat{\Psi}(-P,P)$ is not of rapid decrease for large momenta only for the directions $P$  for which $f = \sum_{k} p_{k0} = 0$ and ${\bf p}_i=0$ $\forall i$. Furthermore, if $f=0$ and ${\bf p}_i=0$, $\forall i$, $\widehat{\Psi}(-P,P)$ is polynomially bounded in $P$ and its growth is tamed by the factor $e^{-\frac{\beta}{2} \sum_j w_j}$. See Theorem 4 and its proof in \cite{FredenhagenLindner} for further details. 
In particular, using the fact that 
\[
\int_0^{\beta}  e^{-ua + \frac{\beta}{2}a}du   =  2\frac{\sinh \left(\frac{\beta}{2}a\right)}{a} , \qquad 
\int_0^{\beta}du_2\int_{0}^{u_2} du_1   e^{-u_2a + u_1a + \frac{\beta}{2}a} = \beta \frac{\sinh \left(\frac{\beta}{2}a\right)}{a} +R(a), \qquad 
\]  
where $R$ is antisymmetric for changes of $a$ to $-a$, symmetrizing the summand over the partitions $\{E_+,E_-\}$ and noticing that under that symmetrization 
$2 \overline{\widehat{K_i^{(l)}}}{\widehat{K_j^{(l)}}}(P_0,{\bf P})$ is mapped to 
$\overline{\widehat{K_i^{(l)}}}{\widehat{K_j^{(l)}}}(P_0,{\bf P}) + {\widehat{K_i^{(l)}}}\overline{\widehat{K_j^{(l)}}}(P_0,-{\bf P})$
we obtain
\begin{gather}
\mathcal{S}(\omega^{\beta,V_1},\omega^{\beta,V_3}) =   
\sum_{l\geq 1} \frac{1}{l!}  \int d{\bf P}_l 
\prod_{j=1}^{l} 
\left(
\frac{e^{-\frac{\beta}{2} w_{j} } }{2w_{j} (1-e^{-\beta w_{j}})}
\right) 
\sum_{\{E_+,E_-\}} 
\frac{\beta\sinh \left( \frac{\beta}{2}f    \right)}{f }  
\notag
\\
\label{eq:first-contr-positive}
\cdot\left.\left(
\overline{\widehat{K_1^{(l)}}} - \overline{\widehat{K_3^{(l)}} } 
\right)
\left(
{\widehat{K_1^{(l)}}}
-
{\widehat{K_3^{(l)}}}
\right)
\right|_{ p_{a 0}= \pm w_{a},\; a\in E_\pm} + O(K_i^{\otimes 3})\,,
\end{gather}
where 
\begin{equation}\label{eq:sum-frequenceies}
f:= \sum_{k} p_{k0}\,,
\end{equation}
and where the minus sign appearing in front of $-{\bf P}$ is removed by a change of integration variables. 
Notice that since the right hand side of \eqref{eq:first-contr-positive} is a sum of positive quantities
hence the sought positivity is proven for the case $t=0$ and $V_1\neq V_3$.

\bigskip
In order to analyze the remaining cases, in view of \eqref{eq:entr-exp-t}, we need to discuss  
\begin{equation}\label{eq:alternative}
\omega^{\beta,V_1}((\alpha_t^{V_2} - \alpha_t^{V_1})(\beta K_3-\beta K_2))\,.
\end{equation}
Let us start observing that 
\begin{gather}
\omega^{\beta,V_1}((\alpha_t^{V_2} - \alpha_t^{V_1})(\beta K_3-\beta K_2)) = 
- i \beta  \int_0^t ds \; \omega^{\beta} \left( \left[\alpha_s(K_1-K_2), \alpha_t(K_3-K_2)\right]_\star \right) + O(K^{\otimes 3}).
\label{eq:second-order-time}
\end{gather}
Furthermore, from \eqref{eq:simpl-posit} we have that for any $A,B$ formally self-adjoint microcausal functionals 
\begin{gather}
- i \beta  \int_0^t ds \; \omega^{\beta} \left( \left[\alpha_s(A), \alpha_t(B)\right]_\star \right)= 
\beta \sum_{l\geq 1} \frac{1}{l!}  \int d{\bf P}_l 
\prod_{j=1}^{l} 
\left(
\frac{e^{-\frac{\beta}{2} w_{j} } }{2w_{j} (1-e^{-\beta w_{j}})}
\right)
\sum_{\{E_+,E_-\}} 
\sinh\left(\frac{f\beta}{2}\right)
\notag
\\ 
\label{eq:expansion-second-order-time}
\cdot
\left.\left(
\frac{1}{f} (1-\cos(ft)) \left(\overline{\widehat{A^{(l)}}}\widehat{B^{(l)}} + \overline{\widehat{B^{(l)}}}\widehat{A^{(l)}}\right)
-i\frac{\sin(ft)}{f} \left(\overline{\widehat{A^{(l)}}}\widehat{B^{(l)}} - \overline{\widehat{B^{(l)}}}\widehat{A^{(l)}}\right)
\right)
\right|_{\substack{ p_{a 0}= \pm w_{a}\\ \forall a\in E_\pm}},
\end{gather}
where $f$ is given in \eqref{eq:sum-frequenceies} and we have symmetrized the summands over $\{E_+,E_-\}$.
Notice that if both $A = B = K_1-K_2$, the terms proportional to $\overline{\widehat{A}}\widehat{B} - \overline{\widehat{B}}\widehat{A}$ in \eqref{eq:expansion-second-order-time} vanish, while the remaining terms are all formally positive.

\bigskip We now proceed with the discussion of the generic case. If 
\[
A = K_1-K_2, \qquad   B=K_3-K_2
\]
we have that  
\begin{equation}\label{eq:expABsim}
\left(\overline{\widehat{A^{(l)}}}\widehat{B^{(l)}} + \overline{\widehat{B^{(l)}}}\widehat{A^{(l)}}\right) = 2 \left| \frac{\widehat{K_1^{(l)}}+\widehat{K_3^{(l)}}}{2}-\widehat{K_2^{(l)}} \right|^2 - \frac{1}{2} \left|\widehat{K_1^{(l)}}-\widehat{K_3^{(l)}} \right|^2.
\end{equation}
Furthermore, since $0\leq 1-\cos{(ft)}\leq 2$, the negative contributions proportional to $|\widehat{K_1^{(l)}}-\widehat{K_3^{(l)}}|^2$ are controlled by $\mathcal{S}(\omega^{\beta,V_1},\omega^{\beta,V_3})$, as is clear from \eqref{eq:first-contr-positive} and the terms proportional to $\left| \widehat{K_1^{(l)}}+\widehat{K_3^{(l)}}-2 \widehat{K_2^{(l)}} \right|^2$ are formally positive. 
Moreover, 
\begin{equation}\label{eq:expABasim}
\left(\overline{\widehat{A^{(l)}}}\widehat{B^{(l)}} - \overline{\widehat{B^{(l)}}}\widehat{A^{(l)}}\right) = 
\overline{\widehat{K_1^{(l)}}}\widehat{K_3^{(l)}} - \overline{\widehat{K_3^{(l)}}}\widehat{K_1^{(l)}}
- \overline{\widehat{K_2^{(l)}}}\widehat{K_3^{(l)}} + \overline{\widehat{K_3^{(l)}}}\widehat{K_2^{(l)}}
+\overline{\widehat{K_2^{(l)}}}\widehat{K_1^{(l)}} - \overline{\widehat{K_1^{(l)}}}\widehat{K_2^{(l)}}.
\end{equation}
Finally, summing \eqref{eq:first-contr-positive} and \eqref{eq:expansion-second-order-time} composed with \eqref{eq:expABsim} and with \eqref{eq:expABasim} we get
\begin{gather}\label{eq:quadratic-rel-entropy}
\mathcal{S}(\omega^{\beta,V_1}\circ{\alpha_t^{V_2}}, \omega^{\beta,V_3}) = 
\beta \sum_{l\geq 1} \frac{1}{l!}  \int d{\bf P}_l 
\prod_{j=1}^{l} 
\left(
\frac{e^{-\frac{\beta}{2} w_{j} } }{2w_{j} (1-e^{-\beta w_{j}})}
\right)
\sum_{\{E_+,E_-\}} 
\left.
\frac{\sinh\left(\frac{f\beta}{2}\right)}{f}
\widehat{F}\overline{\widehat{F}}
\right|_{\substack{ p_{a 0}= \pm w_{a}\\ \forall a\in E_\pm}}
+O(K^{\otimes 3})\,,
\end{gather}
where 
\begin{equation}\label{eq:def-F}
F =  \sin\left(\frac{ft}{2}\right)\left({K_1+K_3}-2 K_2\right) + i  \cos\left(\frac{ft}{2}\right)\left({K_1-K_3}\right)\,,
\end{equation}
and where $f$ is given in \eqref{eq:sum-frequenceies}. The expression \eqref{eq:quadratic-rel-entropy} implies that the second order contribution of the relative entropy $\mathcal{S}(\omega^{\beta,V_1}\circ{\alpha_t^{V_2}}, \omega^{\beta,V_3})$ cannot be negative.

\bigskip
To conclude the proof of the positivity of the relative entropy we need to prove that the second order contributions are strictly positive unless very special conditions are met.

To reach this goal 
we observe that the generic contribution to \eqref{eq:quadratic-rel-entropy} corresponding to an arbitrary but fixed partition $\{E_+,E_-\}$ is non-negative. Therefore, it is sufficient to analyze in details only one of them, we thus chose the contribution $E_-=\emptyset$ in the sum over partitions $\{E_+,E_-\}$ in the second order term of \eqref{eq:quadratic-rel-entropy}. This gives
\begin{eqnarray*}
C_{E_-=\emptyset} &=& 
\beta \sum_{l\geq 1} \frac{1}{l!}  \int d{\bf P}_l 
\prod_{j=1}^{l} 
\left(
\frac{e^{-\frac{\beta}{2} w_{j} } }{2w_{j} (1-e^{-\beta w_{j}})}
\right)
\left.
\frac{\sinh\left(\frac{f\beta}{2}\right)}{f}
\widehat{F}\overline{\widehat{F}}
\right|_{\substack{ p_{a 0}=  w_{a} \forall a}}
\end{eqnarray*}
where in this case $f=\sum_k w_k$. 
Notice that, if $K_1\neq K_3$, then $\cos(\frac{ft}{2})^2(K_1-K_3)^2$ is positive for fixed $t$ and for almost every $f$. Furthermore, 
 if $K_1= K_3\neq K_2$, $\sin(\frac{ft}{2})^2(K_1+K_3-2K_2)^2$
at fixed $t\neq 0$ is positive for almost every $f$.
In the remaining case $V_1=V_3=V_2$ the relative entropy is trivial because $\omega^{\beta, V_1}\circ\alpha_t^{V_1} =  \omega^{\beta, V_1}$ and the same holds whenever $t=0$ and $V_1=V_3$, {because} $\mathcal{S}(\omega^{\beta,V},\omega^{\beta,V})=0$ for every $\omega^{\beta,V}$.
This concludes the proof of point $b)$.

\bigskip
$c)$ The convexity in $V_i$ for every $i$ can be proved in the sense of perturbation theory analyzing the lowest non vanishing order in $\lambda$ of 
\eqref{eq:quadratic-rel-entropy}. This gives a sum of quadratic elements, namely, all possible $F\overline{F}$ in \eqref{eq:quadratic-rel-entropy} for various 
$l$, ${\bf P}_l$ and $E_\pm$. Since all these elements are convex, we have the thesis.

\bigskip
$d)$ The perturbative expansion of the relative entropy \eqref{eq:quadratic-rel-entropy} guarantees continuity for $V_i$ in $\mathcal{F}_{\mu c}$ with respect to the topology of microcausal functionals in the sense of perturbation theory.
\end{proof}

{\bf Remark}
Notice that while the positivity of both $\mathcal{S}(\omega^{\beta,V_1}\circ\alpha_t^{V_2},\omega^{\beta,V_3})$ and $\mathcal{S}(\omega^{\beta,V_1},\omega^{\beta,V_3})$ is proved in point $b)$ of the previous proposition \ref{pr:rel-entropy-positive}, it is not guaranteed that their difference 
\[
\mathcal{S}(\omega^{\beta,V_1}\circ\alpha_t^{V_2},\omega^{\beta,V_3})- \mathcal{S}(\omega^{\beta,V_1},\omega^{\beta,V_3})=\omega^{\beta,V_1}((\alpha_t^{V_2} - \alpha_t^{V_1})(\beta K_3-\beta K_2))
\]
is positive as can be seen composing \eqref{eq:entr-exp-t} with \eqref{eq:second-order-time} and then with \eqref{eq:expansion-second-order-time}. This will have some implication on the entropy production that we shall introduce and discuss below.

\bigskip

{\bf Remark}
The proof of the positivity of in
$\mathcal{S}(\omega^{\beta,V_1}\circ{\alpha_t^{V_2}}, \omega^{\beta,V_3})$ given in \eqref{eq:entr-exp-t} can be  given in the following alternatively shorter way.
Let us start analyzing the $K$-second order contributions in $\mathcal{S}(\omega^{\beta,V_1},\omega^{\beta,V_3})$ given in \eqref{eq:second-order} which is the first term in $\mathcal{S}(\omega^{\beta,V_1}\circ{\alpha_t^{V_2}}, \omega^{\beta,V_3})$ as displayed \eqref{eq:entr-exp-t}.
Exploiting the KMS condition we notice that the integrals over the simplex $S_2$ can be given in terms of integrals over a single variable. In particular, we obtain that 
\[
\int_{\beta S_2} dU  \omega^{c,\beta}(\alpha_{iu_1}K_1\otimes \alpha_{iu_2}K_1)  =
\frac{\beta}{2}\int_{0}^\infty du \; \omega^{c,\beta}(K_1\otimes \alpha_{iu}K_1).  
\]
We thus obtain 
\begin{equation}\label{eq:first-cont}
\mathcal{S}(\omega^{\beta,V_1},\omega^{\beta,V_3}) = \frac{\beta^2}{2} (K_1-K_3\,|\,K_1-K_3)_{\beta} + O(K^{\otimes 3})
\end{equation}
where we used the Duhamel like two-point function which is a sesquilinear product defined in the following way  
\[
(A\,|\,B)_{\beta}:=\frac{1}{\beta}\int_{0}^{\beta}\omega^{c,\beta} (A^{*}\otimes \alpha_{iu}(B))du,
\]
see \cite[Section 5.3]{BR} for more details and on the properties of this product.
Since, $(\cdot\,|\,\cdot)_{\beta}$ is a positive semidefinite sesquilinear form and $K_1-K_3$ is formally selfadjoint we immediately have the positivity of $\mathcal{S}(\omega^{\beta,V_1},\omega^{\beta,V_3})$.

To treat the second order contributions in the remaining term given in \eqref{eq:second-order-bis} or in \eqref{eq:second-order-time}, we use the analyticity property of the state $\omega^\beta$ and its time translation to transform the time integration into an imaginary time integration thus recognizing again a couple of Duhamel two-point function. Actually we obtain
\begin{equation}\label{eq:second-cont}
\omega^{\beta,V_1}((\alpha_t^{V_2} - \alpha_t^{V_1})(\beta K_3-\beta K_2)) =  \beta^2 (K_1-K_2|K_3-K_2)_\beta - \beta^2 (K_1-K_2|\alpha_t(K_3-K_2))_\beta
+ O(K^{\otimes 3})
\end{equation}
Combining the two contributions \eqref{eq:first-cont} and \eqref{eq:second-cont}, exploiting the sesquilinearity of $(\cdot|\cdot)_\beta$ and its time translation invariance we obtain
\[
\mathcal{S}(\omega^{\beta,V_1}\circ{\alpha_t^{V_2}}, \omega^{\beta,V_3})
= \frac{\beta^{2}}{2} (F|F)_\beta+ O(K^{\otimes 3})
\]
where, in analogy with \eqref{eq:def-F}, $F$ is the formal selfadjoint element \[
F = \frac{1}{2}\left(-\alpha_{t/2}\left({K_1+K_3}-2 K_2\right) +\alpha_{-t/2}\left({K_1+K_3}-2 K_2\right) + \alpha_{t/2}\left({K_1-K_3}\right)+\alpha_{-t/2}\left({K_1-K_3}\right)\right).
\]
Hence, the sesquilinearity of the Duhamel two-point function and the form of $K_i$ implies that $\mathcal{S}(\omega^{\beta,V_1}\circ{\alpha_t^{V_2}}, \omega^{\beta,V_3})$ is positive semidefinite in the sense of formal power series.
To ensure the strict positivity the expansion given in \eqref{eq:quadratic-rel-entropy} can be now directly used.
We thank the anonymous referee for suggesting some steps of this alternative proof.

\section{Adiabatic limits}\label{sec:adiabatic-limit}

In this Section we investigate the adiabatic limit, namely the limit where the spatial supports of $V_i$ of the form \eqref{eq:perturbation-lagrangian} tend to the whole space ($h\to1$).

We shall in particular use some ideas and some technical achievements given in \cite{DHP,FredenhagenLindner} to prove similar results concerning the adiabatic limit of the relative entropy given in  \eqref{eq:relative-entropy-KMS-generalized}.\\
First of all we notice that, in the adiabatic limit, the relative entropy 
 diverges
 due to the integral over an infinite space present in its definition.
In \cite{FredenhagenLindner}, Fredenhagen and Lindner have shown that the adiabatic limit can be taken in the sense of van Hove for the state $\omega^{\beta,V}(A)$ if $A$ is of compact support.
Furthermore, Lindner has shown, in Chapter 4.4 of his PhD Thesis \cite{Lindner}, 
{the finiteness of the van Hove limit v-$\lim_{h\to 1}\log (\omega^\beta(U^h(i\beta)))/ I(h)$} where
\begin{equation}\label{eq:volum-norm}
	I(h):=\int_{\mathbb{R}^3} h({\bf x}) d{\bf x}\,.
\end{equation}
In addition, it is known that another kind of infrared divergences appear when the adiabatic limit is taken in $\omega^{\beta}(\alpha_t^V(A))$, see
 \textit{e.g.} \cite{Altherr, Landsman, LeBellac, SteinmannThermal} and Proposition 4.2 in \cite{DFP} for an explicit computation.

For all these reasons we expect to be able to consider the adiabatic limit of expressions of the form
\[
E=\lim_{h\to 1} \frac{1}{I(h)}\;\mathcal{S}(\omega^{\beta,V_1}\circ{\alpha_t}, \omega^{\beta,V_3}),
\]
which has the dimension of an entropy density.
Hence, we expect to be able to resolve the infinite volume integration discussing the corresponding densities. 
We have actually the following definition

\begin{definition}\label{def:relative-entropy-density}
	Let $V^h_i$ for $i\in\{1,3\}$ be two interaction Lagrangians of the form \eqref{eq:perturbation-lagrangian} with the same spatial cutoff $h\in C^\infty_c(\mathbb{R}^3)$.
	We define the relative entropy per unit volume as
	\begin{equation}\label{eq:def-entropy-unit-volume}
		s(\omega^{\beta,V_1}\circ{\alpha_t}, \omega^{\beta,V_3}) := 
		\vlim_{h\to 1} \frac{1}{I(h)}\;\mathcal{S}(\omega^{\beta,V^h_1}\circ{\alpha_t}, \omega^{\beta,V^h_3})\,,
	\end{equation}
	where $I(h)$ is the integral of the cutoff function over the volume $\mathbb{R}^3$ given in \eqref{eq:volum-norm} and the limit $h\to1$ is taken in the sense of van Hove.
\end{definition}

We now proceed checking that the relative entropy per unit volume is finite, hence the previous definition is well posed. 
We need a preliminary proposition and a remark. Let us start considering the $K^h$ constructed as in \eqref{eq:gener-coc} with an interaction Lagrangian $V$ of the from \eqref{eq:perturbation-lagrangian},
we have that 
\begin{equation}\label{eq:K}
K^h = \int h({\bf x})   {H}({\bf x}) d^3{\bf x},
\end{equation}
for a suitable
$H({\bf x})$, then
the following proposition holds.

\begin{proposition}\label{prop:bounds-l}
Consider the function
\[
l(t,\vettore{x}):=\omega^{\beta,V_1}(\alpha_t(\beta H_3(\vettore{x})))
\]
constructed with $V_1$ and $V_3$ of the from \eqref{eq:perturbation-lagrangian}, with 
\[
H_3(\vettore{x})={R}_{V_3} (\dot{V_3}^{\delta_{\vettore{x}}})
\]
given as in \eqref{eq:K} and where the limit $h\to 1$ has already been taken both in $V_1$ and $V_3$.
The function $l(r,\vettore{x})$ is constant in $\vettore{x}$ and uniformly bounded in $t$.
\end{proposition}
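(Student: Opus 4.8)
The plan is to treat the two assertions separately: constancy in $\vettore x$ is a soft consequence of translation invariance, whereas the uniform bound in $t$ rests on the momentum-space estimates already employed in \cite{FredenhagenLindner}.

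For the constancy, I would use that spatial translations $\tau_{\vettore a}$ act as $*$-automorphisms of $\mathcal A$ commuting with the free evolution $\alpha_t$ and leaving $\omega^\beta$ invariant, the latter being immediate from \eqref{eq:2pt-kms-free}, whose kernel depends only on $x-y$. Since $\mathcal L_I$ is translation invariant one has $\tau_{\vettore a}(V^h_i)=V^{h(\cdot-\vettore a)}_i$, hence $\tau_{\vettore a}(\dot V_3^{\delta_{\vettore x}})=\dot V_3^{\delta_{\vettore x-\vettore a}}$, and since $R_{V_3}$ is equivariant under $\tau_{\vettore a}$ once $V_3$ is translation invariant, in the adiabatic limit $\tau_{\vettore a}(H_3(\vettore x))=H_3(\vettore x-\vettore a)$. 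Moreover the spatial translate of a van Hove sequence is again equivalent to a van Hove sequence, so the adiabatic limits of the state $\omega^{\beta,V_1}$ and of the cocycle $U_1(i\beta)$ are translation invariant; concretely $\omega^{\beta,V_1}\circ\tau_{\vettore a}=\omega^{\beta,V_1}$. Putting these together,
\[
l(t,\vettore x+\vettore a)=\omega^{\beta,V_1}\big(\alpha_t(\beta H_3(\vettore x+\vettore a))\big)=\omega^{\beta,V_1}\big(\tau_{-\vettore a}\alpha_t(\beta H_3(\vettore x))\big)=l(t,\vettore x),
\]
so $l$ does not depend on $\vettore x$ and it suffices to bound $l(t):=l(t,\vettore 0)$.

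For the uniform bound in $t$ I would expand $\omega^{\beta,V_1}(\alpha_t(\beta H_3))$ in powers of the coupling by means of the connected expansion \eqref{eq:int-KMS-state-connected}, so that the coefficient of order $\lambda^n$ is a finite sum of integrals over thermal simplices of truncated functions
\[
\int_{\beta S_m}\omega^{c,\beta}\big(\alpha_t(\beta H_3)\otimes\alpha_{iu_1}(K_1)\otimes\cdots\otimes\alpha_{iu_m}(K_1)\big)\,dU,
\]
each entry being at least of first order. Writing each truncated function in the momentum representation as in \eqref{eq:simpl-posit}, the integrand becomes a sum over on-shell partitions of products of the thermal kernels $e^{-\beta w_j/2}/(2w_j(1-e^{-\beta w_j}))$, real exponentials $e^{-u_k(\cdot)}$ from the imaginary-time shifts, a single phase $e^{-itf}$ from the real-time shift $\alpha_t$ acting on $H_3$ (with $f$ the sum of the real on-shell frequencies attached to the $H_3$ vertex), and the Fourier transform of the product of functional derivatives. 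The key points are then: $(i)$ $|e^{-itf}|=1$ uniformly in $t$; $(ii)$ performing the $u$-integrals over the compact simplex $\beta S_m$ yields bounded hyperbolic functions of the frequencies carrying no $t$-dependence; $(iii)$ the remaining momentum integral converges because, as in the proof of Theorem 4 of \cite{FredenhagenLindner}, the Fourier transform of the functional derivatives is of rapid decrease except on the locus where the total spatial momentum absorbed at each $K_1$-vertex vanishes, and there it is only polynomially bounded but its growth is tamed by $\prod_j e^{-\beta w_j/2}$. Hence each contribution is bounded by a $t$-independent constant, and summing the finitely many contributions at order $\lambda^n$ gives $|l(t)|\le C_n\lambda^n$ with $C_n$ independent of $t$, i.e.\ uniform boundedness of $l(t)$ in the sense of formal power series.

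I expect the main obstacle to be item $(iii)$: one has to check that inserting the extra real-time vertex $\alpha_t(\beta H_3)$ does not spoil the decay and taming properties exploited in \cite{FredenhagenLindner}, and that the $t$-phase does not interfere with the imaginary-time integrations needed to expose the damping factors $e^{-\beta w_j/2}$. This reduces to observing that $\alpha_t$, being a $*$-automorphism that merely time-translates the compactly supported kernels defining $H_3(\vettore 0)$, leaves their wave front sets and their polynomial bounds unchanged, uniformly in $t$; once this is granted, $\alpha_t(\beta H_3)$ is for every $t$ an admissible vertex of the type treated there and the uniform bound follows from the estimates already available in the literature.
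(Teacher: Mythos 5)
Your argument for the uniform bound in $t$ is essentially the paper's: both expand $\omega^{\beta,V_1}(\alpha_t(\beta H_3))$ in connected functions, pass to the momentum representation with the on--shell partitions into positive/negative frequencies, observe that the only $t$-dependence sits in phases $e^{\pm itw}$ of modulus one, and delegate the convergence of the momentum integrals (rapid decrease of $\widehat\Psi$ off the singular locus, polynomial growth tamed by $\prod_j e^{-\beta w_j/2}$) to Theorem 4 of \cite{FredenhagenLindner}; your point $(iii)$ is exactly the check the paper performs. Where you genuinely diverge is the constancy in $\vettore x$: you derive it softly from spatial translation covariance of $R_{V_3}$, $H_3$ and the adiabatic--limit state, whereas the paper extracts it from the explicit graphical expansion --- after integrating the $R$-vertices over $\mathbb{R}^3$ one gets momentum--conservation delta functions at each vertex $j\in\{1,\dots,n\}$, which force $\sum_{l:\,s(l)=0}\vettore p_l=0$ and hence kill the phase $e^{i\vettore x_0\cdot(\cdots)}$ attached to the $H_3$-vertex. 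Your route is cleaner but shifts the burden onto the claim that $\omega^{\beta,V_1}\circ\tau_{\vettore a}=\omega^{\beta,V_1}$ in the adiabatic limit: note that $h_n(\cdot-\vettore a)$ is \emph{not} literally a van Hove sequence in the sense of \eqref{Equation: definining properties of van Hove sequence}, so "equivalent to a van Hove sequence'' needs a short squeezing/continuity argument; once you unwind that, you are led back to essentially the same momentum--conservation computation the paper does directly. Also be slightly careful with your display containing $\alpha_{iu_k}(K_1)$: in the adiabatic limit $K_1$ itself is not an element of the algebra, and the correct objects are the spatially integrated densities as in \eqref{eq:mean-ergodic}; your point $(iii)$ shows you are aware of this, but the expansion should be written with the densities from the start.
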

\begin{proof}
To prove this proposition we proceed as in the proof of Theorem 5.1 of \cite{DFP}.
First of all we notice that thanks to Theorem 3 in \cite{FredenhagenLindner} the van Hove limit used in the definition of $l(t,\vettore{x})$ is well defined.
We may thus consider the expansion of $\omega^{\beta,V_1}$ in the adiabatic limit in terms of the connected functions 
\begin{gather}
\omega^{\beta,V_1}(\alpha_t(A)) =  \omega^{\beta}(\alpha_t(A)) 
\notag
\\
\label{eq:mean-ergodic}
+  \sum_{n\geq 1} \int_{\beta S_n} \dvol{u_n} \dots \dvol{u_1} \int_{\mathbb{R}^3} \dvol{^3{\vettore{x}}_1}\dots \int_{\mathbb{R}^3} \dvol{^3{\vettore{x}}_n}  \; \omega^{\beta,c}(\alpha_{t}({A}) \otimes \alpha_{iu_1,{\vettore{x}}_1}(R)\otimes \dots \otimes \alpha_{iu_n,{\vettore{x}}_n}(R))\,,
\end{gather}
where $R := -{R}_{V_1}(\dot{V_1}^{\delta_{0}})$, $A=H_3(\vettore{x})$ and $\alpha_{t,\vettore{x}}$ indicates the automorphisms implementing Minkowski spacetime translations of $(t,\vettore{x})$.
Notice that, thanks to causality and to the form of $\chi$, both $R$ and $A$ have compact support, and as an interacting field $R$ is supported on the points $(t,0)$ where $t\in\supp\dot\chi$. 
Furthermore, as in the proof of Theorem 4 in \cite{FredenhagenLindner}, the connected $n-$point functions can be written by means of the following graphical expansion
\begin{gather*}
\omega^{\beta,c}(\alpha_{iu_0,{\vettore{x}_0}}({A}) \otimes \alpha_{iu_1,{\vettore{x}}_1}(R)\otimes \dots \otimes \alpha_{iu_n,{\vettore{x}}_n}(R))
= \\ 
\sum_{G\in \mathcal{G}^c_{n+1}} \prod_{k<j} \left.\left( \frac{D_{kj}^{l_{kj}}}{l_{kj}!} \right)\cdot
\left(
\alpha_{iu_0,\vettore{x}_0}(A) \otimes \alpha_{iu_1,{\vettore{x}}_1}(R)\otimes \dots \otimes \alpha_{iu_n,{\vettore{x}}_n}(R)
\right)
\right|_{(\phi_0,\dots,\phi_n)=0}\\
=:
\sum_{G\in \mathcal{G}^c_{n+1}} \frac{1}{\text{Symm}(G)} F_{n,G}(u_0,\vettore{x}_0; u_1,\vettore{x}_1;\dots  ;u_n,\vettore{x}_n)\,,
\end{gather*}
where the sum is taken over the oriented connected graphs joining $n+1$ vertices and $\text{Symm}(G)$ is a normalization factor. 
Following the proof of Theorem 4 in \cite{FredenhagenLindner}, $F_{n,G}$ can be computed as
\begin{gather}\label{eq-positive-negative}
F_{n,G}(u_0,{\vettore{x}}_0; u_1,\vettore{x}_1;\dots  ;u_n,\vettore{x}_n) =  \int \dvol{P} \prod_{l\in E(G)}
\frac{e^{i\vettore{p}_l(\vettore{x}_{s(l)}-\vettore{x}_{r(l)})} (\lambda_{+}(p_l)+\lambda_{-}(p_l))}{2w_l(1-e^{-\beta \omega_l})}\cdot
\hat{\Psi}(-P,P)\,,
\end{gather}
where $E(G)$ is the set of lines of the graph $G$, $s(l)$  and $r(l)$ are respectively the indexes of the source and the range of the points joined by the line $l$. Furthermore, $\hat{\Psi}(-P,P)$ is the Fourier transform of $\Psi(X,Y)$ defined as
\[
\Psi(X,Y) = 
\left.
\prod_{l\in E(G)} \frac{\delta^2}{\delta \phi_{s(l)}(x_l)\delta \phi_{r(l)}(y_l)} (A\otimes \underbrace{R\otimes \dots \otimes R}_{n\; \text{times}})
\right|_{(\phi_0,\dots,\phi_n)=0}\,,
\]
so that $X$ and $Y$ stand for $(x_1,\dots ,x_k)$ and $(y_1,\dots ,y_k)$ and $k$ indicates the number of lines in $E(G)$. Hence, in the Fourier transform $\hat{\Psi}(-P,P)$, $P=(p_1,\dots ,p_k)$.

Furthermore, the positive and negative frequency contributions in $D_{ij}$ are then indicated by
\begin{equation}\label{eq:lambdas}
\lambda_+(p_l) = e^{-w_l (u_{r(l)}-u_{s(l)})}\delta(p_{l0}-w_l), \qquad
\lambda_-(p_l) = e^{w_l (u_{r(l)}-u_{s(l)}-\beta)}\delta(p_{l0}+w_l)\,,
\end{equation}
with $w_l=\sqrt{\vettore{p}_l^2+m^2}$. We proceed expanding the products of the sums of positive and negative frequencies parts in \eqref{eq-positive-negative} and performing the 
spatial integrals over $\vettore{x}_1, \dots , \vettore{x}_n$ in $F_{n,G}$. Arguing as in the proof of Theorem 5.1 in \cite{DFP}, we obtain that each graph $G$ in $\mathcal{G}_n^c$ contributes to $\omega^{\beta,V}(\alpha_t(A))$ with a term proportional to

\begin{gather*}
\int_{\beta S_n} \dvol{u_n} \dots \dvol{u_1} \int_{\mathbb{R}^3} \dvol{^3{\vettore{x}}_1}\dots \int_{\mathbb{R}^3} \dvol{^3{\vettore{x}}_n} \;
F_{n,G}(u_0-it,\vettore{x}_0; u_1,\vettore{x}_1;\dots  ;u_n,\vettore{x}_n)\\
=
\left.\sum_{\{E_+,E_-\}\in P_2(E(G))} \int_{\beta S_n} \dvol{U} \int \dvol{\vettore{P}} 
\prod_{l_+\in E_+} \frac{e^{-w_{l_+}(u_{r(l_+)}-u_{s(l_+)})}}{2 w_{l_+}(1-e^{-\beta w_{l_+}})}\cdot
\prod_{l_-\in E_-} \frac{e^{w_{l_-}(u_{r(l_-)}-u_{s(l_-)}-\beta)}}{2w_{l_-}(1-e^{-\beta w_{l_-}})}\cdot
\hat{\Psi}(-P,P) 
\right|_{\substack{p_{k0}=\pm w_{k}, \\ \forall k\in E_\pm}}
\\
\cdot
e^{i\vettore{x}_0 
\left(\sum_{\substack{l\in E(G) \\ s(l)=0}} \vettore{p}_l\right)
}\prod_{j\in \{1,\dots, n\}} \delta\left(\sum_{\substack{l\in E(G) \\ s(l)=j}} \vettore{p}_l-\sum_{\substack{l\in E(G) \\ r(l)=j}} \vettore{p}_l\right)\cdot
\prod_{\substack{e_+\in E_+ \\ s(e_+)=0}} e^{-it w_{e_+}}\cdot\prod_{\substack{e_-\in E_- \\ s(e_-)=0}} e^{it w_{e_-}}\,,
\end{gather*}
where the product of delta functions expresses the momentum conservation.
The exponentials $e^{-it w_{e_+}}$ and $e^{it w_{e_-}} $ are uniformly bounded in time, and the same holds for the results of the remaining  $P$ and $U$ integrations. 
Finally, we observe that the delta functions implementing momentum conservations imply that $\sum_{\substack{l\in E(G) \\ s(l)=0}} \vettore{p}_l{=0}$, and hence  for every graph, $F_{n,G}$ becomes constant in $\vettore{x}_0$ after integration over the other spatial variables, thus concluding the proof.
\end{proof}

\bigskip

{\bf Remark}
We notice that the adiabatic limit, namely the limit $h\to1$ discussed here can be decomposed in two limits. 
Actually, arguing as in Theorem 3 of \cite{FredenhagenLindner}, see also Lemma C.1 in Appendix C of \cite{DHP}, it is possible to observe that the following limits coincide
\begin{equation}\label{eq:vlimit-directions}
\vlim_{h\to 1} \frac{1}{I(h)} \omega^{\beta,V^h}( R_{V^{h}}(\dot{V}^{h})) = 
\vlim_{h_1\to 1} \vlim_{h_2\to 1} \frac{1}{I(h_1)} \omega^{\beta,V^{h_2}}(R_{V^{h_2}}(\dot{V}^{h_1}))\,,
\end{equation}
where $K^h=R_{V^{h}}(\dot{V}^{h})$.
Loosely speaking, the limit $h\to 1$ can be taken in different steps without altering the final result. 

\begin{proposition}\label{prop:rel-ent-pos}
The relative entropy per unit volume $s(\omega^{\beta,V_1}\circ{\alpha_t}, \omega^{\beta,V_3})$ given in \eqref{eq:def-entropy-unit-volume} is finite.
\end{proposition}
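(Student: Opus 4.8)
The plan is to reduce the claim to two facts that are already available: the finiteness of the van Hove limit of $\frac{1}{I(h)}\log(\omega^\beta(U^h(i\beta)))$, shown by Lindner in Chapter~4.4 of \cite{Lindner} and recalled before \eqref{eq:volum-norm}, and Proposition~\ref{prop:bounds-l}. First I would specialise the relative entropy formula \eqref{eq:relative-entropy-KMS-generalized} to the situation of Definition~\ref{def:relative-entropy-density}, where $V_2=0$ and hence $K_2=0$. This writes $\mathcal{S}(\omega^{\beta,V_1^h}\circ\alpha_t,\omega^{\beta,V_3^h})$ as the sum of the four terms
\begin{gather*}
-\omega^{\beta,V_1^h}(\beta K_1^h)\,,\qquad
\omega^{\beta,V_1^h}(\alpha_t(\beta K_3^h))\,,\\
-\log(\omega^\beta(U_1^h(i\beta)))\,,\qquad
\log(\omega^\beta(U_3^h(i\beta)))\,,
\end{gather*}
so that it suffices to prove that each of them, divided by $I(h)$, has a finite van Hove limit.

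For the two logarithmic terms this is precisely Lindner's result, and nothing more is needed. For the first two terms I would use \eqref{eq:K} to write $K_i^h=\int h({\bf x})H_i({\bf x})\,d^3{\bf x}$ and pull the spatial integral outside the (linear) state, so that for instance $\frac{1}{I(h)}\omega^{\beta,V_1^h}(\alpha_t(\beta K_3^h))=\frac{1}{I(h)}\int h({\bf x})\,\omega^{\beta,V_1^h}(\alpha_t(\beta H_3({\bf x})))\,d^3{\bf x}$. Exploiting the two-step decomposition of the adiabatic limit \eqref{eq:vlimit-directions}---the cutoff appearing in the state and the cutoff multiplying the integrand may be sent to $1$ in either order---I would perform the limit inside the state first, which by Proposition~\ref{prop:bounds-l} produces the function $l(t,{\bf x})=\omega^{\beta,V_1}(\alpha_t(\beta H_3({\bf x})))$ that is \emph{constant in ${\bf x}$} and uniformly bounded in $t$. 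Being constant in ${\bf x}$, it factors out of the remaining integral and cancels the prefactor $1/I(h)$ exactly, so the outer limit is trivial and equals $l(t)$, which is finite. The first term is the same computation with $t=0$ and $H_3$ replaced by $H_1$, giving a finite constant $l_1$.

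Putting the pieces together, $s(\omega^{\beta,V_1}\circ\alpha_t,\omega^{\beta,V_3})=-l_1+l(t)-f_1+f_3$ with $f_i:=\vlim_{h\to1}\frac{1}{I(h)}\log(\omega^\beta(U_i^h(i\beta)))$, a finite combination of finite quantities; this proves the proposition and, incidentally, shows the result is bounded uniformly in $t$. The step requiring the most care is not any of these identifications but the justification of interchanging the spatial integration with the state and with the van Hove limit, and in particular the order-independence \eqref{eq:vlimit-directions}; this rests on the uniform momentum-space bounds coming from Theorem~3 of \cite{FredenhagenLindner} and Lemma~C.1 of \cite{DHP}, which were already used in the proof of Proposition~\ref{prop:bounds-l}, so no essentially new estimate is involved.
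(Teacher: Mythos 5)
Your proposal is correct and follows essentially the same route as the paper's proof: reduce the finiteness of the entropy density to Lindner's result on $\vlim_{h\to1}\log(\omega^\beta(U^h(i\beta)))/I(h)$ for the logarithmic terms, and for the terms $\omega^{\beta,V_1^h}(\beta K_1^h)$ and $\omega^{\beta,V_1^h}(\alpha_t(\beta K_3^h))$ use the decomposition \eqref{eq:K}, the order-exchange \eqref{eq:vlimit-directions}, and Proposition~\ref{prop:bounds-l} to obtain an $\vettore{x}$-independent, $t$-bounded density that cancels the $1/I(h)$ normalization. No gaps; this matches the argument in the paper.
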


\begin{proof}
In \cite[Prop. 4.4.2]{Lindner} it has been shown that the van Hove limit
\begin{align}\label{Equation: Falk's results on densities}
L:=\textrm{v-}\lim_{h\to 1}\frac{\log\big(\omega^\beta(U(i\beta))\big)}{I(h)}
\end{align}
exists and is finite.  Recalling the definition of relative entropy \eqref{eq:relative-entropy-KMS-generalized},
to ensure the finiteness of the relative entropy density \eqref{def:relative-entropy-density} in the adiabatic limit
we just need to prove the finiteness of the following two limits
\begin{align}\label{Equation: densities to be shown to be finite in the adiabatic limit}
L_1 := \vlim_{h\to 1} \frac{1}{I(h)} \omega^{\beta,V^h_1}(\beta K^h_1), \qquad 
L_2 := \vlim_{h\to 1} \frac{1}{I(h)} \omega^{\beta,V^h_1}(\alpha_t(\beta K^h_3)).
\end{align}
We shall now consider only $L_2$ because the same conclusions for $L_1$ follow fixing $t=0$ and $V_3=V_1$. 
Let us decompose $K^h_3$ as in \eqref{eq:K} by introducing ${H}^h_3({\bf x})=R_{V^h_3}({\dot{V}_3}^{\delta_{\vettore{x}}})$ 
\[
K^h_3 = \int h({\bf x})   {H}^h_3({\bf x}) d^3{\bf x}.
\]
Equation \eqref{eq:vlimit-directions} discussed in the remarks above implies that
\[
L_2 := \vlim_{h_1\to 1} \vlim_{h_2\to 1} \frac{1}{I(h_1)}
\int h_1({\bf x})   \omega^{\beta,V_1^{h_2}}(\alpha_t(\beta {H}^{h_2}_3({\bf x}))) d^3{\bf x}.
\]
We taken now the limit $h_2\to1$ and thanks to 
Proposition \ref{prop:bounds-l} we have that 
\[
l_2({\bf x}) := \vlim_{h_2\to 1} \omega^{\beta,V^{h_2}_1}(\alpha_t(\beta H^{h_2}_3({\bf x})))
\]
exists, is constant in ${\bf x}$ and bounded in $t$. We can now take the limit $h_1\to 1$ and from 
\eqref{eq:volum-norm} we have that $L_2=l_2$ thus concluding the proof.
\end{proof}

\begin{proposition}
The relative entropy per unit volume $s(\omega^{\beta,V_1}\circ{\alpha_t}, \omega^{\beta,V_2})$ is positive.
\end{proposition}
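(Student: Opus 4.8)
The plan is to obtain positivity of the density as the van Hove limit of the finite-cutoff positivity already established in Proposition \ref{pr:rel-entropy-positive}. By Definition \ref{def:relative-entropy-density} one has $s(\omega^{\beta,V_1}\circ\alpha_t,\omega^{\beta,V_3}) = \vlim_{h\to1}\frac{1}{I(h)}\,\mathcal{S}(\omega^{\beta,V_1^h}\circ\alpha_t,\omega^{\beta,V_3^h})$, and this limit exists by Proposition \ref{prop:rel-ent-pos}. For every cutoff $h\geq 0$ the volume $I(h)$ is strictly positive, and by Proposition \ref{pr:rel-entropy-positive} the relative entropy $\mathcal{S}(\omega^{\beta,V_1^h}\circ\alpha_t,\omega^{\beta,V_3^h})$ is positive in the sense of formal power series, its lowest non-vanishing contribution being the second order one \eqref{eq:quadratic-rel-entropy} --- equivalently $\tfrac{\beta^2}{2}(F^h\,|\,F^h)_\beta$ with $F^h$ given by \eqref{eq:def-F} for $K_2=0$ --- which is a sum of non-negative terms. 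Dividing by $I(h)>0$ and passing to the van Hove limit therefore leaves the lowest order coefficient non-negative, which already yields $s\geq 0$ in the sense of formal power series.

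To upgrade this to strict positivity outside the trivial configurations, I would compute the second order coefficient of $s$ explicitly in the adiabatic limit. Writing $K_i^h=\int h({\bf x})H_i({\bf x})\,d^3{\bf x}$ as in \eqref{eq:K}, inserting this into \eqref{eq:quadratic-rel-entropy} and carrying out the spatial integrals exactly as in the proof of Proposition \ref{prop:bounds-l} (and of Theorem 5.1 of \cite{DFP}), the delta functions enforcing momentum conservation produce a single overall volume factor which is cancelled by $1/I(h)$; after the van Hove limit one is left with the absolutely convergent momentum integral $\beta\sum_{l\geq1}\tfrac{1}{l!}\int d{\bf P}_l\prod_{j}\big(\tfrac{e^{-\frac{\beta}{2} w_j}}{2w_j(1-e^{-\beta w_j})}\big)\sum_{\{E_+,E_-\}}\tfrac{\sinh(f\beta/2)}{f}\,\widehat F\,\overline{\widehat F}$ evaluated on the mass shells. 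Each partition contributes a non-negative term, so it suffices to look at $E_-=\emptyset$, where $f=\sum_k w_k>0$ and the integrand is a strictly positive weight times $|\widehat F|^2$; this vanishes only when $\widehat F\equiv 0$ on the positive mass shell, which by \eqref{eq:def-F} with $K_2=0$ forces either $t=0$ and $V_1=V_3$, or $V_1=V_3=0$. In all the remaining cases $s>0$.

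The delicate point is the middle step: one must check that dividing the volume-divergent, manifestly positive finite-cutoff quantity by $I(h)$ and then letting $h\to1$ does not accidentally produce a vanishing (over-subtracted) remainder rather than the genuine density. This is exactly where the uniform-in-$t$ boundedness and the constancy in the spatial argument supplied by Proposition \ref{prop:bounds-l}, together with the factorisation of the van Hove limit in \eqref{eq:vlimit-directions}, enter: they guarantee that the cancellation of $I(h)$ against the momentum-conservation constraints is exact and that the limiting integrand is precisely the $|\widehat F|^2$ expression above. Once that is in place, strict positivity is an elementary consequence of the positivity of the spectral weight, along the same lines as point $b)$ of Proposition \ref{pr:rel-entropy-positive}.
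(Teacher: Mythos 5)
Your first paragraph is exactly the paper's proof: for every admissible cutoff $h$ the relative entropy $\mathcal{S}(\omega^{\beta,V_1^h}\circ\alpha_t,\omega^{\beta,V_3^h})$ is positive by Proposition \ref{pr:rel-entropy-positive} and $I(h)>0$, so the van Hove limit of the quotient is non-negative; the paper adds only the observation that in the trivial case $t=0$, $V_1=V_3$ the density vanishes identically. Everything after that goes beyond what the paper attempts: the paper is content with positivity in the weak sense (a limit of positive quantities, which of course can be zero), whereas you try to upgrade to strict positivity of the limiting second-order density by tracking how the momentum-conservation delta functions cancel the volume factor $I(h)$. That strengthening is plausible and consistent with the structure of Proposition \ref{prop:bounds-l} and Theorem 4 of \cite{FredenhagenLindner}, but it is not carried out in the paper and, as you yourself note, it hinges on justifying the exchange of the van Hove limit with the momentum integration and on checking that the limiting integrand is exactly $|\widehat F|^2$ restricted by the momentum-conservation constraints; this would require an argument at the level of detail of Proposition \ref{prop:rel-ent-pos}, which you sketch but do not complete. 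For the statement as the paper intends it, your first paragraph alone suffices and matches the published proof.
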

\begin{proof}
First of all notice that if $V_1=V_2$ and $t=0$ the relative entropy per unit volume {vanishes} because $\mathcal{S}(\omega^{\beta, V_1},\omega^{\beta, V_1})$ is zero for every $h$. In the other cases consider a van Hove sequence $h_n$ converging to $1$.  
Notice that $I(h_n)$ is positive for every $n$  because  $h_n$ are positive functions. The relative entropy $\mathcal{S}$ is also positive for every $n$ as shown in proposition \ref{pr:rel-entropy-positive}. The limit for $n\to \infty$ of positive quantities is positive, thus we have the thesis.
\end{proof}

\section{Entropy production}

We start this section recalling the definition of entropy production given for $C^*-$dynamical systems in \cite{JP01,JP02, JP03}, see also the previous works \cite{Oj0,Oj1,Oj2} and \cite{Ruelle 2, Ruelle 3} for the case of spin systems. Let $\omega$ be a KMS state with respect to $\alpha_t$ and let $\alpha_t^{V}$ be the dynamics perturbed by $V$, then, the entropy production\footnote{Notice that the sign of $\mathcal{E}_V$, which descends from the sign of $\sigma_V$, differs only apparently from the definition given in \cite{JP01} because in that paper $\beta$ is assumed to be $-1$ as it is common in the context of Tomita-Takesaki theory.}   
   of $\alpha_t^V$ with respect to $\alpha_t$ in the state $\eta$ is usually defined as
\[
\mathcal{E}^V(\eta):=\eta\left( \sigma^{V} \right), \qquad \text{where}\qquad  \sigma^{V}:=  -\left.\frac{d}{dt} \alpha_t(\beta V)\right|_{t=0}
=
-\left.\frac{d}{dt} \alpha^V_t(\beta V)\right|_{t=0}\,,
\]
where the last equality can be obtained from the cocycle condition of $U_V$ and the definition of its generator.  
Unfortunately, in the case of field theories, $\sigma^V$ cannot be easily computed because the explicit form of the Hamiltonian generating $\alpha_t$ is not known. In spite of this fact, a generalization of that formula can be obtained in the case of states which possess a time invariance. 
In particular, if $\eta$ is a state invariant under the one parameter group of weakly continuous automorphisms $\alpha^{V_1}_t$, it holds that 
\begin{gather*}
\mathcal{E}^{V_2}(\eta) = \left.\frac{d}{dt}    \eta(\alpha_{-t}^{V_1}\alpha_t(-\beta V_2))    \right|_{t=0}, 
\qquad
\mathcal{E}^{V_2}(\eta\circ \alpha_s) = \left.\frac{d}{dt}    \eta(\alpha_{-t}^{V_1}\alpha_t(-\beta V_2))    \right|_{t=s}\,,
\\
\mathcal{E}^{V_2}(\eta\circ \alpha^{V_2}_s) = \left.\frac{d}{dt}    \eta(\alpha_{-t}^{V_1}\alpha^{V_2}_t(-\beta V_2))    \right|_{t=s}\,.
\end{gather*}
These expressions can be directly generalized to the case of quantum field theories constructed perturbatively, we actually introduce the following definition valid in that context. 
\begin{definition}\label{def:entropy-production}
Let $V_i$, for $i\in\{1,2,3\}$, be three perturbation Lagrangians of the form \eqref{eq:perturbation-lagrangian} constructed with the same cutoff function $h\in C^{\infty}_0(\mathbb{R}^3)$ which are past compact and of compact spatial support. Consider
 $\eta$, a state which is invariant under the one parameter group of automorphisms $\alpha_t^{V_1}$. 
The entropy production in the state $\eta$ of $\alpha_t^{V_2}$ relative to $\alpha_t^{V_3}$ (or to $\omega^{\beta,V_3}$) is defined as
\begin{equation}
\mathcal{E}^{V_2}_{V_3}(\eta) := 
\left. \frac{d}{dt}\eta\left(\alpha_{-t}^{V_1} \alpha_t^{V_2} (\beta(K_3-K_2))\right)   \right|_{t=0}\,.
\end{equation}
Analogously, the 
entropy production in the state $\eta\circ{\alpha_t^{V_2}}$ of $\alpha_t^{V_2}$ relative to $\alpha_t^{V_3}$
is defined as
\begin{equation}\label{eq:entropy-production}
\mathcal{E}^{V_2}_{V_3}(\eta\circ\alpha_s^{V_2}) := 
\left. \frac{d}{dt}\eta\left(\alpha_{-t}^{V_1} \alpha_t^{V_2} (\beta(K_3-K_2))\right)   \right|_{t=s}\,.
\end{equation}
\end{definition}
In analogy to Theorem 1.1 in \cite{JP01} for the case of $C^*-$dynamical systems, the following proposition, which motivates the name entropy production, holds true:

\begin{proposition}\label{pr:entr-prod-rel-ent}
Consider $V_i$ for $i\in\{1,2,3\}$ three perturbation potentials which are past compact and with spatially compact supports and the KMS state $\omega^{\beta,V_3}$ then
\begin{equation}\label{eq:entr-prod-rel-ent}
\mathcal{S}(\omega^{\beta,V_1}\circ\alpha^{V_2}_t,\omega^{\beta,V_3}) =\mathcal{S}(\omega^{\beta,V_1},\omega^{\beta,V_3})+ \int_{0}^t  \mathcal{E}^{V_2}_{V_3}(\omega^{\beta,V_1}\circ\alpha^{V_2}_s) \; ds
\end{equation}
where $\mathcal{E}^{V_2}_{V_3}(\omega^{\beta,V_1}\circ\alpha^{V_2}_s)$ is the entropy production relative to the KMS state $\omega^{\beta,V_3}$.
\end{proposition}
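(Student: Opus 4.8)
The plan is to deduce the identity directly from the decomposition \eqref{eq:entr-exp-t} already obtained in the proof of Proposition \ref{pr:rel-entropy-positive}, namely
\[
\mathcal{S}(\omega^{\beta,V_1}\circ{\alpha_t^{V_2}}, \omega^{\beta,V_3}) = \mathcal{S}(\omega^{\beta, V_1},\omega^{\beta, V_3}) + \omega^{\beta,V_1}\bigl((\alpha_t^{V_2}-\alpha_t^{V_1})(\beta K_3-\beta K_2)\bigr)\,.
\]
So it suffices to identify the second summand on the right-hand side with $\int_0^t \mathcal{E}^{V_2}_{V_3}(\omega^{\beta,V_1}\circ\alpha_s^{V_2})\,ds$, where $\mathcal{E}^{V_2}_{V_3}$ is the entropy production introduced in Definition \ref{def:entropy-production} and computed in the state $\eta=\omega^{\beta,V_1}$.

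The key input is that $\omega^{\beta,V_1}$ is a KMS state with respect to $\alpha_t^{V_1}$ on $\mathcal{A}(M)[[\lambda]]$, hence $\alpha_t^{V_1}$-invariant: $\omega^{\beta,V_1}\circ\alpha_{-s}^{V_1}=\omega^{\beta,V_1}$ for every $s$. Using this invariance inside the defining formula \eqref{eq:entropy-production} — applied, for each fixed $s$, to the observable $\alpha_s^{V_2}(\beta(K_3-K_2))$ — the entropy production simplifies to
\[
\mathcal{E}^{V_2}_{V_3}(\omega^{\beta,V_1}\circ\alpha_s^{V_2}) = \left.\frac{d}{du}\,\omega^{\beta,V_1}\bigl(\alpha_u^{V_2}(\beta(K_3-K_2))\bigr)\right|_{u=s}\,.
\]
Here one uses that $u\mapsto\alpha_u^{V_2}(\beta(K_3-K_2))$ is differentiable in the sense of formal power series, which follows from the explicit representation \eqref{eq:int-dynamics} of $\alpha_u^{V_2}$ in terms of its generator $K_2$, since $\beta(K_3-K_2)$ is a microcausal functional and the $\star$-products and the time integrals appearing in that formula act order by order in $\lambda$.

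Next I would integrate this relation over $s\in[0,t]$ and invoke the fundamental theorem of calculus, legitimately applied coefficient by coefficient in $\lambda$, to get
\[
\int_0^t \mathcal{E}^{V_2}_{V_3}(\omega^{\beta,V_1}\circ\alpha_s^{V_2})\,ds = \omega^{\beta,V_1}\bigl(\alpha_t^{V_2}(\beta(K_3-K_2))\bigr) - \omega^{\beta,V_1}\bigl(\beta(K_3-K_2)\bigr)\,.
\]
Invoking once more the $\alpha_t^{V_1}$-invariance of $\omega^{\beta,V_1}$ to rewrite $\omega^{\beta,V_1}(\beta(K_3-K_2)) = \omega^{\beta,V_1}(\alpha_t^{V_1}(\beta(K_3-K_2)))$, the right-hand side equals $\omega^{\beta,V_1}\bigl((\alpha_t^{V_2}-\alpha_t^{V_1})(\beta K_3-\beta K_2)\bigr)$. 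Substituting this into \eqref{eq:entr-exp-t} gives \eqref{eq:entr-prod-rel-ent}, completing the argument.

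The only point needing a bit of care — and the main, though mild, obstacle — is the justification that $u\mapsto\omega^{\beta,V_1}(\alpha_u^{V_2}(\beta(K_3-K_2)))$ is $C^1$ and that the $s$-integral commutes with the formal power series expansion in $\lambda$ and with the expectation functional. This is dealt with exactly as in the proof of Proposition \ref{pr:rel-entropy-positive} and in Theorem 4 of \cite{FredenhagenLindner}: each coefficient in the $\lambda$-expansion is an ordinary smooth function of the time parameters, so that all the manipulations above are legitimate term by term.
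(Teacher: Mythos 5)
Your proof is correct and follows essentially the same route as the paper's: both start from the decomposition \eqref{eq:entr-exp-t}, use the $\alpha_t^{V_1}$-invariance of $\omega^{\beta,V_1}$, and apply the fundamental theorem of calculus to recognize the time-derivative in Definition \ref{def:entropy-production}. The only cosmetic difference is that you use the invariance to drop $\alpha_{-s}^{V_1}$ before integrating, whereas the paper keeps it inside and differentiates $s\mapsto\omega^{\beta,V_1}(\alpha_{-s}^{V_1}\alpha_s^{V_2}(\beta K_3-\beta K_2))$ directly.
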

\begin{proof}
Equation \eqref{eq:entr-exp-t}, the invariance of $\omega^{\beta,V_1}$ with respect to $\alpha_t^{V_1}$ and the fact that $\alpha_0^{V_2}$ is the identity  imply that 
\begin{gather*}
\mathcal{S}(\omega^{\beta,V_1}\circ\alpha^{V_2}_t,\omega^{\beta,V_3}) -\mathcal{S}(\omega^{\beta,V_1},\omega^{\beta,V_3})
=
\omega^{\beta,V_1}((\alpha_t^{V_2} - \alpha_t^{V_1})(\beta K_3-\beta K_2))
=\\
\omega^{\beta,V_1}(\alpha_{-t}^{V_1}\alpha_t^{V_2}(\beta K_3-\beta K_2))
-
\omega^{\beta,V_1}(\alpha_{0}^{V_1}\alpha_0^{V_2}(\beta K_3-\beta K_2))
=
\int_0^{t} \frac{d}{ds} 
\omega^{\beta,V_1}(\alpha_{-s}^{V_1}\alpha_s^{V_2}(\beta K_3-\beta K_2))\,ds.
\end{gather*}
The proof can thus be concluded recalling the definition \ref{def:entropy-production}.
\end{proof}

\bigskip
{\bf Remark} Notice that, the entropy production is not always positive, because as discussed in the remark after proposition \ref{pr:rel-entropy-positive},
the difference $\mathcal{S}(\omega^{\beta,V_1}\circ\alpha^{V_2}_t,\omega^{\beta,V_3}) -\mathcal{S}(\omega^{\beta,V_1},\omega^{\beta,V_3})$ is not necessarily positive. However, if ergodic means (infinite time average) are considered, we have that the entropy production is positive. We shall prove this fact below.

\bigskip
Notice also that from the definition \ref{def:entropy-production} it holds that $\mathcal{E}_{V_3}^{V_1}(\omega^{\beta,V_1}\circ\alpha^{V_2}_t)$ vanishes if $V_2=V_3$. Furthermore, proposition \ref{pr:entr-prod-rel-ent} implies that the entropy production vanishes also in the case $V_1=V_2$.

\bigskip
We shall now rewrite the entropy production in a way which shall be useful in the analysis of time averages.
\begin{proposition}\label{prop:entropy-production-commutator}
In the case of $V_2=0$ it holds that
\[
\mathcal{E}_{V_3}(\omega^{\beta,V_1}\circ\alpha_t) = \beta\omega^{\beta,V_1}(\alpha_t(\Phi_t))
\]
where
\[
\Phi_t = -i  [\alpha_{-t}K_1,K_3]_\star\,.
\]
\end{proposition}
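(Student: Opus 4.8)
The plan is to start from Definition \ref{def:entropy-production} with $V_2=0$ and differentiate the expression $\omega^{\beta,V_1}(\alpha_{-t}^{V_1}\alpha_t(\beta K_3))$ in $t$. Since $\alpha_0=\mathrm{id}$, we have $\mathcal{E}_{V_3}(\omega^{\beta,V_1}\circ\alpha_t) = \beta\,\frac{d}{ds}\omega^{\beta,V_1}(\alpha_{-s}^{V_1}\alpha_s(K_3))\big|_{s=t}$. The key computational input is the relation \eqref{eq:free-int-evol} between free and interacting evolution via the cocycle, $\alpha^{V_1}_s(A) = U_1(s)\star\alpha_s(A)\star U_1(s)^*$, together with the cocycle equation \eqref{eq:cocycle-K}, $\frac{d}{ds}U_1(s) = iU_1(s)\star\alpha_s(K_1)$. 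Composing, $\alpha_{-s}^{V_1}\alpha_s(K_3)$ can be rewritten using the inverse cocycle; writing $\alpha_{-s}^{V_1}(B) = \alpha_{-s}\big(U_1(s)^*\star B\star U_1(s)\big)$ (which follows from the cocycle/automorphism relations), we get
\[
\alpha_{-s}^{V_1}\alpha_s(K_3) = \alpha_{-s}\big(U_1(s)^*\star \alpha_s(K_3)\star U_1(s)\big).
\]

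Next I would differentiate this in $s$. The derivative of the outer $\alpha_{-s}$ produces a term proportional to $-i[\,\cdot\,,K]$-type contributions but, crucially, since $\omega^{\beta,V_1}$ will be applied and $\omega^{\beta,V_1}$ is $\alpha^{V_1}_t$-invariant (equivalently KMS for $\alpha_t^{V_1}$), the most efficient route is to push the $\alpha_{-s}$ outside and use invariance: $\omega^{\beta,V_1}(\alpha_{-s}^{V_1}\alpha_s(K_3)) = \omega^{\beta,V_1}\big(\alpha_{-s}(U_1(s)^*\star\alpha_s(K_3)\star U_1(s))\big)$, but it is cleaner to keep things inside $\omega^{\beta,V_1}$ using its $\alpha^{V_1}$-invariance to absorb $\alpha^{V_1}_{-s}$ and then differentiate. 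Concretely, set $g(s) := \omega^{\beta,V_1}(\alpha_{-s}^{V_1}\alpha_s(K_3))$ and compute $g'(s)$ by the product/chain rule on the two $s$-dependences. The $\alpha_t^{V_1}$-invariance of $\omega^{\beta,V_1}$ lets one replace $\alpha^{V_1}_{-s}$-conjugation derivatives by a single commutator: one finds $g'(s) = \omega^{\beta,V_1}\big(\alpha^{V_1}_{-s}\big(-i[\alpha_s(K_3),\,\text{(generator of interacting evolution)}]_\star + \alpha_s(\tfrac{d}{ds}K_3)\big)\big)$, where $K_3$ has no intrinsic $s$-dependence so the second term drops, and the generator of $\alpha^{V_1}_s$ acting after conjugation is $K_1$. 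Carefully bookkeeping signs, this gives $g'(s) = \omega^{\beta,V_1}\big(\alpha^{V_1}_{-s}(i[\alpha_s(K_1),\alpha_s(K_3)]_\star)\big)$, wait — one should instead land on $\omega^{\beta,V_1}(\alpha_s(\,-i[\alpha_{-s}K_1,K_3]_\star\,))$ after using invariance to move everything into $\alpha_s$; this is exactly $\omega^{\beta,V_1}(\alpha_t(\Phi_t))$ with $\Phi_t = -i[\alpha_{-t}K_1,K_3]_\star$.

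The main obstacle I expect is the sign and ordering bookkeeping: getting the cocycle derivative to land with the correct sign and ensuring the two $s$-derivatives (one from $\alpha_{-s}^{V_1}$, one from $\alpha_s$) combine into precisely a single $\star$-commutator rather than leaving an extra term. The cleanest way to control this is to first establish, purely algebraically (at the level of formal power series in $\lambda$ using \eqref{eq:free-int-evol} and \eqref{eq:cocycle-K}), the identity
\[
\frac{d}{ds}\big(\alpha_{-s}^{V_1}\alpha_s(A)\big) = \alpha_{-s}^{V_1}\alpha_s\big(i[\alpha_{-s}(K_1),A]_\star\big)
\]
valid for any local $A$ with $A$ replaced here by $K_3$ — actually by a fixed element, noting $\frac{dA}{ds}=0$. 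Then apply $\beta\,\omega^{\beta,V_1}$ to both sides, use $\alpha^{V_1}_t$-invariance of $\omega^{\beta,V_1}$ to kill $\alpha_{-s}^{V_1}$, set $s=t$, and read off $\mathcal{E}_{V_3}(\omega^{\beta,V_1}\circ\alpha_t) = \beta\,\omega^{\beta,V_1}\big(\alpha_t(-i[\alpha_{-t}(K_1),K_3]_\star)\big)$, which is the claim. All manipulations are legitimate in the sense of formal power series since $U_1(s)$ and $K_1$ are power series in $\lambda$ with local coefficients, so no convergence issue arises.
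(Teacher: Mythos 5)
Your proposal follows the paper's proof essentially verbatim: rewrite $\alpha^{V_1}_{-s}\alpha_s(K_3)$ as a conjugation by the cocycle $U_1(-s)$ (your form $\alpha_{-s}\bigl(U_1(s)^*\star\cdot\star U_1(s)\bigr)$ equals the paper's $U_1(-s)\star\cdot\star U_1(-s)^*$ via the cocycle identity), differentiate using \eqref{eq:cocycle-K} to produce the $\star$-commutator with $\alpha_{-s}(K_1)$, and absorb the residual $\alpha^{V_1}_{-s}$ by the invariance of $\omega^{\beta,V_1}$. The only blemish is that your displayed intermediate identity carries $+i\,[\alpha_{-s}(K_1),A]_\star$ where the correct coefficient is $-i$ (as your own final formula confirms); with that sign fixed the argument is exactly the paper's.
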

\begin{proof}
Notice that 
\[
\alpha_t^V(A) = U_V(t)\star\alpha_t(A) \star U_V(t)^*\,,
\]
furthermore, the cocycle condition and the form of the generator $K$ imply 
\[
-i\frac{d}{dt}U_V(t)= U_V(t)\star \alpha_t(K)\,.
\]
Starting from the definition of entropy production \eqref{eq:entropy-production} we have 
\begin{align*}
\mathcal{E}_{V_3}(\omega^{\beta,V_1}\circ\alpha_t) & =  \beta \; \omega^{\beta, V_1}\left(\frac{d}{dt}    \alpha_{-t}^{V_1} \alpha_t (K_3)\right) \\
& =  \beta \; \omega^{\beta, V_1}\left(\frac{d}{dt}    \left(U_{V_1}(-t)\star K_3\star U_{V_1}(-t)^{*}  \right) \right) \\
& =  -i \beta \; \omega^{\beta, V_1}\left( U_{V_1}(-t)\star\alpha_{-t }([ K_1, \alpha_t(K_3)]_\star)\star U_{V_1}(-t)^{*} \right) \\
& =  -i \beta \; \omega^{\beta, V_1}\left(\alpha^{V_1}_{-t }\alpha_t ([ \alpha_{-t}K_1, K_3]_\star)\right) \\
& =  -i \beta \; \omega^{\beta, V_1} \circ\alpha_t \left([ \alpha_{-t}K_1, K_3]_\star\right)\,.
\end{align*}
Hence, the thesis holds.
\end{proof}
Notice that similar results have been obtained in another context by Haag and Trych-Pohlmeyer in \cite{Haag Trych}.
We proceed discussing the adiabatic limits of the entropy production and its relations to the relative entropy per unit volume introduced in definition \ref{def:relative-entropy-density}. In particular, we analyze the interplay  with the return to equilibrium property discussed for the case of scalar field theories in \cite{DFP}. In that paper it was proven that return to equilibrium holds if $V$ 
is of spatially compact support.
It is furthermore shown that for a perturbation potential which is not of spatial compact support, $\omega\circ \alpha_t^{V}$ and $\omega^{\beta,V}\circ \alpha_t$ do not converge for large time $t$. For this reason their 
ergodic means  (temporal averages) have been considered.
Furthermore, in \cite{DFP}, it was shown that the limit 
\begin{equation}\label{eq:NESS}
\omega^{+}_V := \lim_{t\to\infty} \vlim_{h\to 1}  \frac{1}{t} \int_0^t ds\;  \omega^{\beta,V^h}\circ \alpha_s
\end{equation}
results in a state (constructed perturbatively) for the free algebra. 
Although not being a KMS with respect to $\alpha_t$, this state is {invariant} under time translations, hence, it is a non equilibrium steady state (NESS) \cite{Ru00}.

We would like to estimate how far is $\omega^+_V$ from the equilibrium state $\omega^\beta$. This could be done estimating their relative entropy, but, 
unfortunately, \eqref{eq:relative-entropy-KMS-generalized} cannot be directly applied. It is slightly easier to analyze the entropy production in $\omega^+_V$ adopting the definition given for example in \cite{JP01}.

Furthermore, to avoid infrared problems, the entropy production per unit volume of a NESS $\omega^{+}$ is then obtained extending Definition \ref{def:entropy-production} to the infinite volume case as previously done in Definition \ref{def:relative-entropy-density} of Section \ref{sec:adiabatic-limit} for the relative entropy. 
Hence,  in close analogy to equation (1.2) of \cite{JP01}, the entropy production per unit volume of $\alpha_t$ in the state $\omega^+_{V_1}$
is defined as
\begin{equation}\label{eq:ent-prod-ness}
e_{V_3}(\omega^+_{V_1}) := \lim_{t\to \infty} \vlim_{h\to 1} \frac{1}{t} \frac{1}{I(h)} \int_0^t ds \;  \omega^{\beta,V^h_1}\circ \alpha_s(\beta \Phi_s), \qquad \Phi_s = -i  [\alpha_{-s}K^h_1,K^h_3]_\star,
\end{equation}
where the normalization factor is given in \eqref{eq:volum-norm} and where we have the reformulation of $\mathcal{E}(\omega^{\beta,V_1}\circ\alpha_t)$ given in Proposition \ref{prop:entropy-production-commutator}.

From the positivity of the relative entropy given in Proposition \ref{pr:rel-entropy-positive}, it descends that if $V_1=V_3=V$, the entropy production per unit volume $e_{V}(\omega^+_{V})$ is positive, actually
\begin{eqnarray*}
e_{V}(\omega^+_{V}) &=& \lim_{t\to \infty} \vlim_{h\to 1} \frac{1}{t} \frac{1}{I(h)} \int_0^{t} ds\;  \omega^{\beta,V^h}\circ \alpha_s(\beta \Phi_s) =
\lim_{t\to \infty} \vlim_{h\to 1} \frac{1}{t} \frac{1}{I(h)} \int_{0}^t  \mathcal{E}(\omega^{\beta,V^h}\circ\alpha_s) \; ds \\
&=&
\lim_{t\to \infty} \vlim_{h\to 1} \frac{1}{t} \frac{1}{I(h)}   \left( \mathcal{S}(\omega^{\beta, V^h}\circ \alpha_t,\omega^{\beta, V^h})- \mathcal{S}(\omega^{\beta, V^h},\omega^{\beta, V^h})\right)
\\&=&
\lim_{t\to \infty} \vlim_{h\to 1} \frac{1}{t} \frac{1}{I(h)}    \mathcal{S}(\omega^{\beta, V^h}\circ \alpha_t,\omega^{\beta, V^h})\,,
\end{eqnarray*}
where in the last but one equality we used the fact that $\mathcal{S}(\omega^{\beta,V},\omega^{\beta,V})=0$. Furthermore, 
the right hand side of the previous equation is positive because, thanks to item $b)$ in Proposition \ref{pr:rel-entropy-positive},  
$\mathcal{S}(\omega^{\beta, V}\circ \alpha_t,\omega^{\beta, V})$ is positive for every $h$ and $I(h)$ is also positive. 
However, we shall now see in the next proposition that, since $\mathcal{S}/I$ is bounded uniformly in $h$ and $t$, the entropy production per unit volume vanishes also in the generic case. 

\begin{theorem}\label{th:vanishing-ep}
Let $V_1, V_3$ be two interaction Lagrangians of the form \eqref{eq:perturbation-lagrangian} constructed with the same cutoff function $h$, the NESS $\omega^+_{V}$ obtained with the time average as described in \eqref{eq:NESS} is thermodynamically simple, namely 
\[
e_{V_3}(\omega^+_{V_1}) = 0\,.
\]
{In other words, the} entropy production per unit volume referred to any interacting KMS state vanishes. 
\end{theorem}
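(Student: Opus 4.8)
The plan is to reduce the claim to the finiteness and uniform boundedness of the relative entropy per unit volume that have already been established. First I would use Proposition~\ref{prop:entropy-production-commutator} to recognise the integrand of \eqref{eq:ent-prod-ness} as the entropy production of the free evolution relative to $\omega^{\beta,V_3}$, namely $\omega^{\beta,V^h_1}\circ\alpha_s(\beta\Phi_s)=\mathcal{E}_{V_3}(\omega^{\beta,V^h_1}\circ\alpha_s)$ with $\Phi_s=-i[\alpha_{-s}K^h_1,K^h_3]_\star$, and then apply Proposition~\ref{pr:entr-prod-rel-ent} with $V_2=0$ to rewrite the time integral as a difference of relative entropies:
\[
\int_0^t ds\;\omega^{\beta,V^h_1}\circ\alpha_s(\beta\Phi_s)
=\mathcal{S}(\omega^{\beta,V^h_1}\circ\alpha_t,\omega^{\beta,V^h_3})-\mathcal{S}(\omega^{\beta,V^h_1},\omega^{\beta,V^h_3})\,.
\]
Dividing by $t\,I(h)$ and inserting this into the definition \eqref{eq:ent-prod-ness} one obtains
\[
e_{V_3}(\omega^+_{V_1})=\lim_{t\to\infty}\vlim_{h\to1}\frac{1}{t}\left(\frac{\mathcal{S}(\omega^{\beta,V^h_1}\circ\alpha_t,\omega^{\beta,V^h_3})}{I(h)}-\frac{\mathcal{S}(\omega^{\beta,V^h_1},\omega^{\beta,V^h_3})}{I(h)}\right)\,.
\]

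Next I would carry out the van Hove limit $h\to1$ termwise. By Proposition~\ref{prop:rel-ent-pos} the limit $s(\omega^{\beta,V_1}\circ\alpha_t,\omega^{\beta,V_3}):=\vlim_{h\to1}\mathcal{S}(\omega^{\beta,V^h_1}\circ\alpha_t,\omega^{\beta,V^h_3})/I(h)$ exists and is finite for every $t$, and likewise for the $t=0$ term $s(\omega^{\beta,V_1},\omega^{\beta,V_3})$; hence
\[
e_{V_3}(\omega^+_{V_1})=\lim_{t\to\infty}\frac{1}{t}\bigl(s(\omega^{\beta,V_1}\circ\alpha_t,\omega^{\beta,V_3})-s(\omega^{\beta,V_1},\omega^{\beta,V_3})\bigr)\,.
\]
The crucial input is that $t\mapsto s(\omega^{\beta,V_1}\circ\alpha_t,\omega^{\beta,V_3})$ is bounded uniformly in $t$. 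Inspecting \eqref{eq:relative-entropy-KMS-generalized} with $V_2=0$, only the summand $\omega^{\beta,V^h_1}(\alpha_t(\beta K^h_3))/I(h)$ carries a dependence on $t$; decomposing $K^h_3=\int h({\bf x})H^h_3({\bf x})\,d^3{\bf x}$ as in \eqref{eq:K} and using the factorisation of the adiabatic limit \eqref{eq:vlimit-directions}, its van Hove limit equals $\vlim_{h\to1}\omega^{\beta,V^h_1}(\alpha_t(\beta H^h_3({\bf x})))$, which by Proposition~\ref{prop:bounds-l} is constant in ${\bf x}$ and uniformly bounded in $t$. The other three summands of \eqref{eq:relative-entropy-KMS-generalized}, once divided by $I(h)$, have finite van Hove limits independent of $t$ (the two logarithmic ones by \cite[Prop. 4.4.2]{Lindner}, the term $-\omega^{\beta,V^h_1}(\beta K^h_1)/I(h)$ by the $t=0$ case of the same argument). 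Thus there is a constant $C$, independent of $t$, with $\lvert s(\omega^{\beta,V_1}\circ\alpha_t,\omega^{\beta,V_3})\rvert\le C$ for all $t$, and therefore $\lvert e_{V_3}(\omega^+_{V_1})\rvert\le\lim_{t\to\infty}2C/t=0$.

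The main obstacle is exactly this uniform-in-$t$ bound, and its proof is already contained in Proposition~\ref{prop:bounds-l} (which follows the lines of Theorem~5.1 of \cite{DFP}); once that is granted the argument is just a matter of assembling the two identities of Propositions~\ref{prop:entropy-production-commutator} and \ref{pr:entr-prod-rel-ent} with the finiteness statement of Proposition~\ref{prop:rel-ent-pos}. The estimates should be read order by order in $\lambda$, which causes no difficulty since the bounds above hold at each perturbative order. As a consistency check, in the special case $V_1=V_3$ one already knows from item~$b)$ of Proposition~\ref{pr:rel-entropy-positive} that $\mathcal{S}(\omega^{\beta,V^h}\circ\alpha_t,\omega^{\beta,V^h})\ge0$, so the above limit is squeezed between $0$ and $C/t$, recovering $e_V(\omega^+_V)=0$ and confirming that the vanishing is genuine.
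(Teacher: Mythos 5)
Your proposal is correct and follows essentially the same route as the paper's proof: reduce the entropy production to the $t$-difference of relative entropies via Propositions~\ref{prop:entropy-production-commutator} and \ref{pr:entr-prod-rel-ent}, observe (as in \eqref{eq:entr-exp-t}) that the only $t$-dependent contribution is $\omega^{\beta,V^h_1}(\alpha_t(\beta K^h_3))$, and control its van Hove limit uniformly in $t$ by the decomposition \eqref{eq:K}, the factorisation \eqref{eq:vlimit-directions} and Proposition~\ref{prop:bounds-l}, so that division by $t$ kills the limit. The assembly of the lemmas and the final $C/t$ estimate match the paper's argument.
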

\begin{proof}
Start from \eqref{eq:ent-prod-ness}, then proposition \ref{pr:entr-prod-rel-ent} implies that 
\[
e_{V_3}(\omega^+_{V_1}) = 
\lim_{t\to \infty} \vlim_{h\to 1} \frac{1}{t} \frac{1}{I(h)}   \left( \mathcal{S}(\omega^{\beta, V^h_1}\circ \alpha_t,\omega^{\beta, V^h_3})- \mathcal{S}(\omega^{\beta, V^h_1},\omega^{\beta, V^h_3})\right)\,,
\]
hence \eqref{eq:entr-exp-t} implies that  
\[
e_{V_3}(\omega^+_{V_1}) = 
\lim_{t\to \infty} \vlim_{h\to 1} \frac{1}{t} \frac{1}{I(h)} 
\left( \omega^{\beta,V^h_1}(\alpha_t(\beta K^h_3))  - \omega^{\beta,V^h_1}(\beta K^h_3)\right)\,.
\]
In order to analyze the limits $h\to 1$ and $t\to\infty$ we notice that 
\[
\omega^{\beta,V_1}(\alpha_t(\beta K_3))   =  \int_{\mathbb{R}^3}  h(\vettore{x}_0) \omega^{\beta,V_1}(\alpha_t(\beta H_3(\vettore{x}_0)))   \dvol^3 \vettore{x}_0.
\]

Furthermore, arguing as in the proof of Proposition \ref{prop:rel-ent-pos}, we obtain that the result of
\[
L= \vlim_{h\to1}\frac{1}{I(h)} \omega^{\beta,V_1}(\alpha_t(\beta K_3))\,,
\]
is equal to 
\[
l({t,}\vettore{x}_0) = \vlim_{h\to1} \omega^{\beta,V_1}(\alpha_t(\beta H_3(\vettore{x}_0)))\,
\]
which is a function constant in $\vettore{x}$ and uniformly bounded in time thanks to Proposition \ref{prop:bounds-l}. Hence, it exists a constant $C$ such that
\[
\left| L \right| = |l| \leq C.
\] 
This implies that 
\[
\left|e_{V_3}(\omega^+_{V_1})\right| \leq \frac{C}{t}\,,
\]
for every $t>0$ hence,  $e_{V_3}(\omega^+_{V_1})$ vanishes.
\end{proof}
Theorem \ref{th:vanishing-ep} and Proposition \ref{pr:entr-prod-rel-ent} imply that the non-equilibrium steady state $\omega^+$ defined in \cite{DFP} has vanishing entropy production per unit volume. 
The physical consequence of this fact is that $\omega^+$ is thermodynamically close to $\omega^\beta$, that is $\omega^{+}$ is not far from being an equilibrium state. In the standard Statistical Mechanics, this would imply the NESS to be in the normal folium of the free KMS $\omega^{\beta}$.

\section{Conclusion}

In this paper we have defined the notion of relative entropy between interacting KMS states introduced by Fredenhagen and Lindner \cite{FredenhagenLindner} in the framework of perturbative algebraic quantum field theory. We have shown that this definition is compatible with perturbation theory, in particular, it is quadratic with respect to the coupling constant, it is positive, convex with respect to sums of perturbation Lagrangians and continuous with respect to the topology of microcausal functionals. Furthermore, this definition can be used also for KMS states in the adiabatic limit if the corresponding densities are considered.
In the second part of the paper, we have also analyzed the entropy production for this class of interacting states and, in Proposition \ref{pr:entr-prod-rel-ent} we have proved an analogous result to those presented in \cite{JP01}. Also in this case the adiabatic limit can be taken consistently by passing to densitized quantities. 

The introduction of this formalism allowed us to characterize the non-equilibrium steady state $\omega^{+}$, defined originally in \cite{DFP}, as a thermodynamically simple state, \textit{i.e.} a non-equilibrium state whose entropy production vanishes. Actually, it would be interesting to check if this state could be interpreted as a Generalized Gibbs Ensemble, see \cite{Gogolin} and references therein.

Since entropy is one of the principal tools in the study of non-equilibrium Physics, these two definitions may open the way to a deepened study of the thermodynamics of non-equilibrium steady states in the context of field theory, as to some extent, it is done in the present work. For instance, it would be very interesting to obtain a definition and computation of the energy fluxes which characterize this class of states (see \cite{JP02,JP03}), so getting a formulation of a second law of Thermodynamics in the framework of perturbative Algebraic Quantum Field Theory.

From a conceptual point of view it would be interesting to extend our Definition \ref{def:relative-entropy-KMS-generalized} to a larger class of states.
In this respect, a possibility would be to consider the variational formulations for the relative entropy proposed in \cite{Kosaki}, which may provide a correct generalization to the QFT setting.

Finally, recently, Hollands and Sanders \cite{Hollands Sanders} have introduced a notion of ``relative entanglement measure'' of a normal state as the infimum of the Araki relative entropy among all separable states. It will be interesting to merge those results with the extension of relative entropy analyzed in this paper.

\subsection*{Acknowledgments}
This paper is dedicated to the memory of Claudio Faldino father of one of the authors who recently passed away unexpectedly. In particular, FF wishes to express to him his deepest gratitude for his support, his encouragement and for all the sacrifices he made for no reasons but the love that a father feels for his son.
The authors are grateful to K. Fredenhagen and to V. Jak\v{s}i\'c for helpful discussions on the subject. We also thank L. Bruneau for useful comments on an earlier version of this paper. N.D. was supported by the National Group of Mathematical Physics (GNFM-INdAM).

\appendix

\section{Araki relative entropy for perturbed KMS states}
\label{ap:rel-entropy}

Consider a finite dimensional\footnote{The very same formulas hold in the infinite case too, see \cite{BR}. The finite-dimension assumption allows us to avoid some technicalities which are inessential for the scope of this Appendix.}
 $C^*-$dynamical system formed by the von Neumann algebra $\mathfrak{N}$ and a the automorphisms $\alpha_t$ which implements the time evolution. 
Let $\Omega_0$ be the KMS state at inverse temperature $\beta$ with respect to time translations generated by 
$H$.

Consider three perturbations $P_i$, $i\in\{1,2,3\}$ which are self-adjoint elements of  $\mathfrak{N}$ and KMS states $\Omega_i$ obtained by means of the Araki construction over $\Omega_0$. It holds that 
\[
\Omega_i = \frac{1}{N_i} U_i\Omega_0, \qquad 
U_i = e^{-\frac{\beta}{2} (H+P_i)}e^{\frac{\beta}{2} H},
\qquad N_i^2 = (\Omega_0,U_i^*U_i\Omega_0).
\]
Let $W_i(t)$ be the weakly continuous one-parameter groups of unitary evolutions obtained by means of the Stone theorem from the generators $H+P_i$.

The relative modular operator between the states $\Psi:= W_2(t)\Omega_1$ and $\Phi := \Omega_3$ is obtained starting from 
\begin{align*}
S_{\Psi\Phi} A W_2(t)\Omega_1  &=  A^* \Omega_3=  A^* \frac{1}{N_3} U_3\Omega_0 \\
&= \frac{N_1}{N_3} A^*  U_3 U_1^{-1}\Omega_1 \\
&=  \frac{N_1}{N_3} W_2(t)W_2(t)^* A^*  U_3 U_1^{-1}\Omega_1 \\ 
&= \frac{N_1}{N_3} W_2(t) S_1  {U_1^{-1}}^*  U_3^*  A  W_2(t)  \Omega_1\,,
\end{align*}
where we have used $S_1$, the operator which realizes the conjugation $S_1 A\Omega_1 = A^*\Omega_1$.
Hence
\[
 \Delta_{\Psi\Phi} = 
 \left(\frac{N_1}{N_{3}}\right)^2         U_3 {U_1}^{-1}  S_1^*     S_1 {U_1^{-1}}^*  U_3^*  =   \left(\frac{N_1}{N_{3}}\right)^2  e^{-\beta(H+P_3)} \,,
\]
where we have used the fact that the modular operator of $\Omega_1$ is $\Delta_1=S_1^*S_1=e^{-\beta (H+P_1)}$.  
Hence the relative entropy
\begin{equation}\label{eq:rel-entropy-vN}
\mathcal{S}(\Psi,\Phi) = -\beta (\Omega_1, (P_1-P_2) \Omega_1)  +\beta (W_2(t)\Omega_1, (P_3-P_2) W_2(t)\Omega_1)  - \log (N_1^2) + \log  (N_3^2)\,.
\end{equation}
In particular, one finds
\begin{equation}\label{eq:rel-entropy-vN-bis}
\mathcal{S}(\Omega_1,\Omega_3) = -\beta (\Omega_1, (P_1-P_3) \Omega_1)  - \log (N_1^2) + \log  (N_3^2)\,.
\end{equation}

\section{Further explicit expression for the interacting evolution in pAQFT}\label{Appendix: Further explicit expression for the interacting evolution in pAQFT}
In this appendix, we shall collect some explicit expressions for the interacting evolution, for its cocycle and the corresponding KMS states in pAQFT.

\subsection{ Expansion of the the interaction $\alpha_t^V$}
Once the generator $K^h$ is identified as in \eqref{eq:gener-coc}, the interaction dynamics can be expanded in terms of the free one in the following way.
\begin{equation}\label{eq:int-dynamics}
\alpha_t^{V^h}(A)=\alpha_t(A)+\sum_{n\geq 1}i^n\int_{tS_n}
\left[\alpha_{t_1}(K^h),\dots,\left[\alpha_{t_n}(K^h),\alpha_t(A)
\right]_\star \right]_\star dt_1\ldots dt_n,
\end{equation}
\subsection{Expansion of $\omega^{\beta,V}$ in terms of connected functions}
Following \cite{Lindner, FredenhagenLindner}, see also \cite{Araki-KMS}, if $\omega^\beta$ is the extremal KMS state with respect of the evolution $\alpha_t$ at inverse temperature $\beta$ of the free theory we have that, for every $V$ which is of compact support,
\begin{equation}\label{eq:int-KMS-state-connected}
\omega^{\beta,V}(A)=\omega^\beta(A) + \sum_{n= 1}^\infty(-1)^n\int_{\beta S_n}dU\,\omega^{c,\beta}\left(A\otimes\bigotimes_{k=1}^n\alpha_{iu_k}(K)\right)\,.
\end{equation}
where the integrals are taken over
\begin{align}\label{Equation: definition of n-th simplex}
	S_n{=\{(u_1,\ldots,u_n)\in\mathbb{R}^n|\; 0\leq u_1\leq\ldots\leq u_n\leq 1\}}\,,
\end{align}
the $n-$dimensional simplex of edge $1$ and $K$ is the generator of the cocycle $U$ which intertwines the free and interacting evolution.
The functional $\omega^{\beta,c}$ is the connected part of the state $\omega^{\beta}$, which is defined by the equation
\begin{align}\label{Equation: definition of connected part}
	\omega^\beta(A_1\star\ldots\star A_n)=\sum_{P\in\mathsf{P}\{1,\ldots,n\}}\prod_{I\in P}\omega^{\beta,c}\bigg(\bigotimes_{\ell\in I}A_\ell\bigg)\,\qquad
	\forall A_1,\ldots,A_n\in\mathcal{A}\,,\forall n\in\mathbb{Z}_+\,,
\end{align}
together with the condition $\omega^{\beta,c}(1_{\mathcal{A}})=0$.
Here $\mathsf{P}\{1,\ldots,n\}$ denotes the set of partition of $\{1,\ldots,n\}$ in non-empty subsets.

\subsection{Normalization factor}
In this appendix we would like to recall the way in which $\log(\omega^{\beta}(U(i\beta)))$ can be obtained following the presentation given in the PhD Thesis of Lindner \cite{Lindner}.
Exploiting the properties of the connected functions $\omega^{c,\beta}$ of $\omega^\beta$ one finds that
\begin{gather*}
\omega^\beta\left(U(i\beta)\right)=
\sum_{n=1}^\infty(-1)^n\int_{\beta S_n}dU\,\omega^\beta\left(\alpha_{iu_1}{K}\star\dots\star\alpha_{iu_n}{K}\right)=\\
\exp\left(\sum_{n=1}^\infty(-1)^n\int_{\beta S_n}dU \,\omega^{c,\beta}\left(\alpha_{iu_1}{K}\otimes\dots\otimes\alpha_{iu_n}{K}\right)\right)\,,
\end{gather*}
where the integrals are taken over $S_n$ the $n-$dimensional simplex of edge $1$.
Furthermore, the first equality is nothing but the expansion of $U(i\beta)$ in terms of its generator $K$. Taking the logarithm we obtain:

\begin{equation}\label{eq:log-cocycle}
\log\left(\omega^\beta(U(i\beta))\right)=
\sum_{n=1}^\infty(-1)^n\int_{\beta S_n}dU\,\omega^{c,\beta}\left(\alpha_{iu_1}{K}\otimes\dots\otimes\alpha_{iu_n}{K}\right)\,.
\end{equation}


\begin{thebibliography}{999}

\bibitem[Al90]{Altherr}
T.~Altherr,
\emph{``Infrared problem in $g\varphi^4$ theory at finite temperature,''} 
Phys.\ Lett.\ {\bf B 238} (24), (1990) 360-366. 

\bibitem[Ar73]{Araki-KMS}
H.~Araki,
\emph{``Relative Hamiltonian for faithful normal states of a von Neumann algebra,''} 
Publ.\ RIMS, Kyoto Univ.\ {\bf 9}(1), (1973) 165-209.

\bibitem[Ar76]{Araki}
H.~Araki,
\emph{``Relative Entropy of States of von Neumann Algebras,''}
Publ.\ RIMS, Kyoto Univ.\ {\bf 11}, (1976) 809-833.

\bibitem[Ar77]{Araki2}
H.~Araki,
\emph{``Relative Entropy of States of von Neumann Algebras II,''}
Publ.\ RIMS, Kyoto Univ.\ {\bf 13}, (1977) 173-192.

\bibitem[BG12]{BaGi12}
C. B\"ar and N.Ginoux:
\textit{Classical and quantum fields on Lorentzian manifolds},
in: C. B\"ar et al. (eds.), Global Differential Geometry, pp. 359-400, Springer, Berlin (2012).

\bibitem[BD15]{Doyon 2} D.~Bernard and B.~Doyon, \emph{``Non-equilibrium steady-states in conformal field theory,''} Ann. Henri Poincar\'e $\mathbf{16}$ (2015) 113-161.

\bibitem[BD16]{Doyon 3} D.~Bernard and B.~Doyon, \emph{``Conformal field theory out of equilibrium: a review,''} J. Stat. Mech. {\bf 2016}, (2016) 064055 .

\bibitem[BS59]{Bogoliubov} N.~N.~Bogoliubov and D.~V.~Shirkov  \emph{``Introduction to theory of Quantized Fields,''}
Interscience Publishers, Inc., New York (1959)

\bibitem[BKR78]{BrKiRo}
O.~Bratteli, A.~Kishimoto,  D.W.~Robinson, 
\emph{``Stability properties and the KMS condition,''} 
Commun.\ Math.\ Phys.\ {\bf 61}, (1978) 209-238. 

\bibitem[BR97]{BR}
O.~Bratteli,  D.W.~Robinson, 
\emph{``Operator algebras and quantum statistical mechanics 2,''} 
Springer, Berlin (1997).

\bibitem[BDF09]{BDF}
  R.~Brunetti, M.~Duetsch and K.~Fredenhagen,
  \emph{``Perturbative Algebraic Quantum Field Theory and the Renormalization
  Groups,''}
Adv.\ Theor.\  Math.\  Phys.\ {\bf 13}, (2009) 1541-1599.

\bibitem[BF00]{BF00}
  R.~Brunetti and K.~Fredenhagen,
\emph{``Microlocal analysis and interacting quantum field theories:
  Renormalization on physical backgrounds,''}
  Commun.\ Math.\ Phys.\  {\bf 208} (2000) 623.

\bibitem[BF09]{BF09}
  R.~Brunetti and K.~Fredenhagen,
  {\it ``Quantum Field Theory on Curved Backgrounds,''}
  in Lecture Notes in Physics 786, ed. Springer (2009), pp. 129-155 Chapter 5.
  
\bibitem[BFK95]{BFK}
  R.~Brunetti, K.~Fredenhagen and M.~K\"ohler,
  \emph{``The microlocal spectrum condition and Wick polynomials of free fields on
  curved spacetimes,''}
  Commun.\ Math.\ Phys.\  {\bf 180}, (1996) 633.
 
\bibitem[BFV03]{BFV}
  R.~Brunetti, K.~Fredenhagen and R.~Verch,
\emph{``The generally covariant locality principle: A new paradigm for local quantum physics,''}
  Commun.\ Math.\ Phys.\  {\bf 237}, (2003) 31.

\bibitem[CF09]{CF}
B.~Chilian, K.~Fredenhagen,
\emph{``The time-slice axiom in perturbative quantum field theory on globally hyperbolic spacetimes,''}
Commun.\ Math.\ Phys.\ {\bf 287}, (2009) 513-522. 

\bibitem[Do90]{Donald}
M.~J.~Donald, \emph{``Relative Hamiltonians which are not bounded from above,''} J. Func. Anal. {\bf 91}, (1990) 143-173.

\bibitem[DLSB15]{Doyon 1} B.~Doyon, A.~Lucas, K.~Schalm and M.~J.~Bhaseen, \emph{``Non-equilibrium steady states in the Klein-Gordon theory,''} J. Phys. A: Math. Theor. $\mathbf{48}$, (2015) 095002.

\bibitem[DFP17]{DFP}
N.~Drago, F.~Faldino, N.~Pinamonti, 
\emph{``On the stability of KMS states in perturbative algebraic quantum field theories,''}
Commun.\ Math.\ Phys.\ (2017) DOI: 10.1007/s00220-017-2975-x  [arXiv:1609.01124]

\bibitem[DHP16]{DHP}
N.~Drago, T.-P.~Hack, N.~Pinamonti, 
\emph{``The generalised principle of perturbative agreement and the thermal mass,''}
Ann.\ Henri.\  Poincar\'e {\bf 18} (2017), 807-868.

\bibitem[EG73]{EG}
H.~Epstein and V.~Glaser, 
\emph{``The role of locality in perturbation theory, ''}
Ann.\ Inst.\ Henri\ Poincar\'e Section
A, vol. XIX, n.3, {\bf 211} (1973)

\bibitem[FL14]{FredenhagenLindner}
K.~Fredenhagen, F.~Lindner,
\emph{``Construction of KMS States in Perturbative QFT and Renormalized Hamiltonian Dynamics,''}
Commun.\ Math.\ Phys.\  {\bf 332},  (2014) 895.

\bibitem[FR12]{FredenhagenRejzner}
K.~Fredenhagen, K.~Rejzner,
\emph{``Perturbative algebraic quantum field theory,''}
in ``Mathematical Aspects of Quantum Field Theories,'' D. Calaque and T. Strobl (Eds.), Springer (2015) 17-55
[arXiv:1208.1428 [math-ph]].

\bibitem[FR16]{FredenhagenRejzner2}
K.~Fredenhagen, K.~Rejzner,
\emph{``QFT on curved spacetimes: axiomatic framework and examples,''}
J.\ Math.\ Phys.\ {\bf 57} (2016) 031101

\bibitem[GE16]{Gogolin}
C.~Gogolin, J.~Eisert
\emph{``Equilibration, Thermalisation, and the emergence of statistical mechanics in closed quantum systems,''}
Rep. on Progr. in Physics {\bf 79}, (2016) 056001.

\bibitem[Ha92]{Haag}
R.~Haag,
\emph{``Local Quantum Physics. Fields Particles Algebras,''}
Text and Monographs in Physics. Springer-Verlag, Berlin (1992).

\bibitem[HKT74]{HKT}
R.~Haag, D.~Kastler, E.~B.~Trych-Pohlmeyer, 
\emph{``Stability and Equilibrium States,''}
Commun.\ Math.\ Phys.\ {\bf 38}, (1974) 173-193.

\bibitem[HHW67]{HHW}
R.~Haag, N.M.~Hugenholtz, M.~Winnink, 
\emph{``On the equilibrium states in quantum statistical mechanics,''}
Commun.\ Math.\ Phys.\ {\bf 5}, (1967) 215-236.

\bibitem[HTP77]{Haag Trych} 
R. Haag and E. Trych-Pohlmeyer, \emph{``Stability Properties and Equilibrium States,''}. Commun. Math. Phys. \textbf{56}, (1977) 213 .

\bibitem[HL16]{Hollands Longo} 
S.~Hollands and R. Longo, \emph{``Non-Equilibrium Thermodynamics and Conformal Field Theory,''} Commun. Math. Phys (to appear), arXiv:[1605.01581] (2016).

\bibitem[HS17]{Hollands Sanders} 
S.~Hollands and K.~Sanders, \emph{``Entanglement measures and their properties in quantum field theory,''} arXiv:[1702.04924] (2017).

\bibitem[HW01]{HW01}
  S.~Hollands and R.~M.~Wald,
  \emph{``Local Wick polynomials and time ordered products of quantum fields in curved spacetime,''}
  Commun.\ Math.\ Phys.\  {\bf 223},  (2001) 289.

\bibitem[HW02]{HW02}
  S.~Hollands and R.~M.~Wald,
\emph{``Existence of local covariant time ordered products of quantum fields in curved spacetime,''}
  Commun.\ Math.\ Phys.\  {\bf 231}, (2002) 309 .

\bibitem[HW03]{HW03}
  S.~Hollands and R.~M.~Wald,
\emph{``On the Renormalization Group in Curved Spacetime,''}
  Commun.\ Math.\ Phys.\  {\bf 237}, (2003) 123-160.

\bibitem[HW05]{HW05}
  S.~Hollands and R.~M.~Wald,
\emph{``Conservation of the stress tensor in perturbative interacting quantum field theory in curved spacetimes,''}
  Rev.\ Math.\ Phys.\  {\bf 17}, (2005) 227-312.  	
  
\bibitem[JP01]{JP01}
V.~Jak{\v s}i\'c, C.-A.~Pillet,
\emph{``On entropy production in quantum statistical mechanic,''}
Commun.\ Math.\ Phys.\ {\bf 217}, (2001) 285.

\bibitem[JP02a]{JP02}
V.~Jak{\v s}i\'c, C.-A.~Pillet,
\emph{``Non-Equilibrium Steady States of Finite Quantum Systems Coupled to Thermal Reservoirs,''}
Commun.\ Math.\ Phys.\ {\bf 226}, (2002) 131-162.

\bibitem[JP02b]{JP03} 
V.~Jak{\v s}i\'c, C.-A.~Pillet, \emph{``Mathematical Theory of Non-Equilibrium Quantum Statistical Mechanics'',} J. Stat. Phys. $\mathbf{108}$, Nos. 516 (2002) 787-829.

\bibitem[KW91]{Kay Wald}
B.~S.~Kay, R.~M.~Wald, \emph{``Theorems on the uniqueness and thermal properties of stationary, nonsingular, quasifree states on spacetimes with a bifurcate killing horizon,''}
Phy.\ Rep.\  {\bf 207}(2), (1991) 49-136. 

\bibitem[Ko86]{Kosaki}
H. Kosaki,
\textit{Relative entropy of states: a variational expression},
J. Operator Theory16, 335–348 (1986).

\bibitem[LW97]{Landsman}
N.~P.~Landsman, C.~G.~van Weert,
\emph{``Real and imaginary time field theory at finite temperature and density,''} 
Phys.\ Rep.\  {\bf 145}  (1987) 141.

\bibitem[Le00]{LeBellac} 
M.~Le Bellac, 
\emph{``Thermal Field Theory,''}
Cambridge University Press, Cambridge (2000).

\bibitem[Li13]{Lindner}
F.~Lindner,
\emph{``Perturbative Algebraic Quantum Field Theory at Finite Temperature,''}
PhD thesis, University of Hamburg (2013). 

\bibitem[Mo03]{Moretti}
V.~Moretti, 
\emph{``Comments on the Stress-Energy Tensor Operator in Curved Spacetime,''}
Commun.\ Math.\ Phys.\  {\bf 232} (2003) 189-221.

\bibitem[OHI88] {Oj0}
I.~Ojima, 
H.~Hasegawa, 
M.~Ichiyanagi,
\emph{``Entropy production and its positivity in nonlinear response theory of quantum dynamical systems,''} 
J. Stat. Phys. {\bf 50}, (1988) 633.


\bibitem[Oj89]{Oj1}
I.~Ojima, 
\emph{``Entropy production and non-equilibrium stationarity in quantum dynamical systems: physical meaning of van Hove limit,''}
J.\ Stat.\ Phys. {\bf 56},  (1989) 203.

\bibitem[Oj91]{Oj2}
I.~Ojima, 
\emph{``Entropy production and non-equilibrium stationarity in quantum dynamical systems,''} 
In: Proceedings of international workshop on quantum aspects of optical communications. Lecture
Notes in Physics 378, 164. Berlin: Springer-Verlag, (1991).

\bibitem[Ra96]{Radzikowski}
  M.~J.~Radzikowski,
\emph{``Micro-Local Approach To The Hadamard Condition In Quantum Field Theory On Curved Space-Time,''}
  Commun.\ Math.\ Phys.\  {\bf 179} (1996) 529.

\bibitem[Ro73]{Robinson}
D.W.~Robinson, 
\emph{``Return to equilibrium,''} 
Commun.\ Math.\ Phys.\ {\bf 31}, (1973) 171-189. 

\bibitem[Ru00]{Ru00}
D.Ruelle,
\textit{Natural nonequilibrium states in quantum statistical mechanics,''}
J. Statistical Phys. {\bf 98}, (2000) 57-75.

\bibitem[Ru01]{Ruelle 2} 
D. Ruelle, \emph{``Entropy production in quantum spin systems,''} Commun. Math. Phys. \textbf{224}, (2001) 3. 

\bibitem[Ru02]{Ruelle 3} 
D. Ruelle, \emph{``How should one define entropy production for nonequilibrium quantum spin systems?,''} Rev. Math. Phys. Vol. $\mathbf{14}$, Nos. 7-8 (2002) 701-707.

\bibitem[Sp78]{Spohn} 
H. Spohn, \emph{``Entropy production for quantum dynamical semigroups,''} J. Math. Phys. \textbf{19}, (1978) 227. 

\bibitem[SL77]{Spohn Lebowitz} 
H.~Spohn and J.~L.~Lebowitz \emph{``Stationary Non-Equilibrium States of Infinite Harmonic Systems,''} Commun. Math. Phys. $\mathbf{54}$, (1977) 97-120. 

\bibitem[St71]{Steinmann}
O.~Steinmann,
\emph{``Perturbation Expansions in Axiomatic Field Theory,''}
Lect. Notes in Phys. {\bf 11}. Berlin: Springer-Verlag, 1971

\bibitem[St95]{SteinmannThermal}
O.~Steinmann,
\emph{``Perturbative quantum field theory at positive temperature: an axiomatic approach,''}
Commun.\ Math.\ Phys.\ {\bf 170}, (1995) 405-416. 

\end{thebibliography}
\end{document}